\documentclass[10pt,a4paper,english]{article}\usepackage[]{graphicx}\usepackage[]{xcolor}
\makeatletter
\def\maxwidth{ %
  \ifdim\Gin@nat@width>\linewidth
    \linewidth
  \else
    \Gin@nat@width
  \fi
}
\makeatother

\definecolor{fgcolor}{rgb}{0.345, 0.345, 0.345}

\usepackage{framed}
\makeatletter
 {\par\unskip\endMakeFramed%
 \at@end@of@kframe}
\makeatother

\definecolor{shadecolor}{rgb}{.97, .97, .97}
\definecolor{messagecolor}{rgb}{0, 0, 0}
\definecolor{warningcolor}{rgb}{1, 0, 1}
\definecolor{errorcolor}{rgb}{1, 0, 0}
\newenvironment{knitrout}{}{} 

\usepackage{alltt}

\batchmode

\usepackage[hyphens]{url}
\usepackage{amsmath}
\usepackage{amsfonts}
\usepackage{amssymb}
\usepackage{amsthm}

\theoremstyle{plain}
\newtheorem{theorem}{Theorem}[section]

\newtheorem{corollary}[theorem]{Corollary}

\theoremstyle{definition}

\theoremstyle{remark}

\PassOptionsToPackage{hyphens}{url}\usepackage{hyperref}
\usepackage{hyperref}
\usepackage[square,numbers]{natbib}
\usepackage{orcidlink}
\usepackage{gitinfo2}

\usepackage[newitem,newenum,increaseonly]{paralist}


\usepackage[colorinlistoftodos]{todonotes}

\usepackage[notheorems]{common}
\usepackage{debsr}

    \usepackage{color} 
    \usepackage{enumerate} 
    \usepackage{amsmath} 
    \usepackage{amssymb} 
		\usepackage{fancyvrb} 
    \usepackage{hyperref}

    \definecolor{orange}{cmyk}{0,0.4,0.8,0.2}
    \definecolor{darkorange}{rgb}{.71,0.21,0.01}
    \definecolor{darkgreen}{rgb}{.12,.54,.11}
    \definecolor{myteal}{rgb}{.26, .44, .56}
    \definecolor{gray}{gray}{0.45}
    \definecolor{lightgray}{gray}{.95}
    \definecolor{mediumgray}{gray}{.8}
    \definecolor{inputbackground}{rgb}{.95, .95, .85}
    \definecolor{outputbackground}{rgb}{.95, .95, .95}
    \definecolor{traceback}{rgb}{1, .95, .95}
    \definecolor{red}{rgb}{.6,0,0}
    \definecolor{green}{rgb}{0,.65,0}
    \definecolor{brown}{rgb}{0.6,0.6,0}
    \definecolor{blue}{rgb}{0,.145,.698}
    \definecolor{purple}{rgb}{.698,.145,.698}
    \definecolor{cyan}{rgb}{0,.698,.698}
    \definecolor{lightgray}{gray}{0.5}
    
    \definecolor{darkgray}{gray}{0.25}
    \definecolor{lightred}{rgb}{1.0,0.39,0.28}
    \definecolor{lightgreen}{rgb}{0.48,0.99,0.0}
    \definecolor{lightblue}{rgb}{0.53,0.81,0.92}
    \definecolor{lightpurple}{rgb}{0.87,0.63,0.87}
    \definecolor{lightcyan}{rgb}{0.5,1.0,0.83}

    \definecolor{incolor}{rgb}{0.0, 0.0, 0.5}
    \definecolor{outcolor}{rgb}{0.545, 0.0, 0.0}

    \hypersetup{
      breaklinks=true,  
      colorlinks=true,
      urlcolor=blue,
      linkcolor=darkorange,
      citecolor=darkgreen,
      }
    
\IfFileExists{upquote.sty}{\usepackage{upquote}}{}
\begin{document}

\title{Post-Selection Estimation of Sharpe Ratios}
\author{\orcidlink{0000-0002-4197-6195} Steven E. Pav \thanks{\email{steven@gilgamath.com}.
}}
\date{\today}
\maketitle

\begin{abstract}
We consider the problem of estimating the true \txtSR of an asset selected for having the highest observed in-sample \txtSR among many assets.
We discuss estimators based on the polyhedral lemma, James Stein shrinkage, debiasing the expected maximum \txtSR, thresholding and empirical Bayes.
We test these estimators in simulations, computing bias and root mean square error across different values of sample size, number of assets, 
and spread and shape of population \txtSRs. 
We also compute rank correlation of the estimators against the underlying quantity, simulating how
these estimators might be used to compare or rank the output of different teams which perform this selection process.
We find that the James Stein estimator provides the best performance across many different realistic values of the relevant parameters,
followed by the GMLEB estimator of \citeauthor{jiang2009generalmaximumlikelihoodempirical}.
These results are fairly robust to correlation of asset returns, with some caveats.
\end{abstract}

\section{Introduction}

From the dawn of computing, quantitative-minded traders have used computers to try to analyze and predict market movements.  \cite{kaufman1978commodity,cootner1964random}
If the quality of outcomes was monotonic in computing power invested, however, there would be many rich nerds indeed.
This seems not to be the case, instead devising and backtesting thousands or millions of candidate trading strategies often results in disappointing performance out of sample.
There is a clear need even still for statistical tools to correct for the effects of selecting strategies based on their in-sample performance.

The setup we consider is fairly simple: a quantitative strategist devises and backtests some large number of strategies,
selects the one with the highest in-sample \txtSR, then wishes to estimate or perform inference on the population \txtSR of that strategy.
Early inferential approaches to the problem relied on corrections for familywise error rate.  \cite{bonferroni1936teoria,handbook2022}
\citeauthor{White:2000} proposed the use of the bootstrap for his ``Reality Check'',
which was improved upon by \citeauthor{Hansen:2005} \emph{inter alia}.  \cite{White:2000,Hansen:2005}
Previously we compared several procedures for the inferential task, including approaches based on directional alternatives
and a conditional test based on the ``polyhedral lemma'' which finds currency in the field of selective inference.  \cite{pav2019maxsharpe,nla.cat-vn3800977,lee2013exact}
More recent approaches leverage statistical learning theory for the inference problem. \cite{paleologo2025}

In this work we consider the estimation problem, namely:
conditional on selecting an asset or strategy because it has the highest in-sample \txtSR,
can we compute an estimate of the \txtSNR of that asset or strategy which is unbiased or has low mean square error?
Outside the context of quantitative trading, this problem has considerable history. 
One approach to the problem is via \emph{thresholding}, where values above a threshold (perhaps chosen empirically) are shrunk by a certain amount towards zero.
\cite{Donoho01121995,johnstone2004needles,JSSv012i08}
This can also be viewed from the viewpoint of Empirical Bayes.  \cite{04f9ad2d-909a-34da-ad09-b5ac91b665b6}
\citeauthor{reid2017post} introduced a maximum likelihood estimator based on the polyhedral lemma, and compared it to several other techniques,
including the threshold-based methods, as well as the James-Stein estimator.  \cite{reid2017post,james1961estimation}

This note does not seek to introduce new methods for the problem.
Rather we adapt existing methods to the problem of estimating the \txtSNR of the asset with maximum \txtSR,
and test them on population parameters which are likely to be of interest to practitioners,
with realistic sample sizes, effect sizes and distribution and number of strategies considered.
In that regard, this paper is quite similar to that of \citeauthor{reid2017post}, but adapted to the econometrics of the \txtSR. \cite{reid2017post}

\section{Estimation of the \txtSR}

The \txtSR is the most widely used statistic for evaluating and comparing trading strategies. \cite{Sharpe:1966,pav_the_book}
Defined as the sample mean of observed percent returns divided by the standard deviation of the same, the \txtSR is
exactly equal to the $t$-statistic up to scaling.
We wish to perform inference on what some would call the \emph{ex ante} \txtSR, what we call the \txtSNR,
which uses population mean and standard deviation.

We assume one has access to, or has constructed, \nstrat trading strategies.
Let 
$$
\psnr[i] = \frac{\pmu[i]}{\psig[i]}
$$
be the \txtSNR of the \kth{i} strategy. Collect all of them into the $\nstrat$-vector \pvsnr.
We backtest or observe \ssiz days of returns and compute the \txtSR of the \kth{i} strategy, call it $\ssr[i]$,
which we collect in the vector \svsr.

To simplify the exposition we assume that the indices have been reordered after the statistics are observed
so that
\begin{equation}
\ssr[1] \le \ssr[2] \le \ldots \le \ssr[\nstrat].
\end{equation}
We synchronize the indexing between the population and sample values,
so this does \emph{not} imply the same ordering holds on the $\psr[i]$.
That is $\psr[\nstrat]$ is the \txtSNR of the strategy which was observed to have the highest \txtSR, but it might not be the largest of the $\psr[i]$.
We simply prefer to use indexing like $\ssr[i]$ than $\ssr[(i)]$, which is more standard for this problem.

We suppose that the returns of the strategies are correlated with correlation matrix $\RMAT$.
When returns are elliptically distributed we showed previously that
\begin{equation}
\svsr \approx 
\normlaw{\pvsnr,\oneby{\ssiz}\wrapParens{\RMAT +
	\frac{\kurty-1}{4} \ogram{\pvsnr} + 
	\frac{\kurty}{2} \Mdiag{\pvsnr} \wrapParens{\RMAT\hadm\RMAT} \Mdiag{\pvsnr}}},
\label{eqn:apx_srdist_elliptical}
\end{equation}
where \kurty is the ``kurtosis factor'', equal to one third the kurtosis of the
marginals.  \cite{pav2019maxsharpe}
When measured on daily returns we expect the \txtSNR to be not much larger than around $0.16\dayto{-1/2}$,
corresponding to around $2.5\yrto{-\halff}$ assuming $252$ trading days per year.
As such quantities quadratic in $\pvsnr$ tend to be very small, and we often can get away with the approximation
\begin{equation}
\svsr \approx 
\normlaw{\pvsnr,\oneby{\ssiz}\RMAT}.
  \label{eqn:apx_srdist_rough}
\end{equation}

Thus our problem is fairly well approximated by the classical problem of estimation with normal noise.
In much of our testing we will assume independent returns, where $\RMAT=\eye$.
A perhaps more accurate model for many strategy searches is that of \emph{compound symmetry} where
$\RMAT = \wrapParens{1 - \rho}\eye + \rho \vone\trvone$; this is a correlation matrix with $\rho$ on the off-diagonals.
A correlation matrix of this form is convenient because it preserves order in this sense: we can transform our way back to 
the independent case while preserving identity of the maximum element. \cite{pav2019maxsharpe}
That is 
\begin{equation}
  \vect{z} = \sqrt{\ssiz}\ichol{\RMAT} \wrapParens{\svsr - \pvsnr} \approx \normlaw{\vzero,\eye},
  \label{eqn:bonf_zform_simple}
\end{equation}
where $\ichol{\RMAT}$ is the inverse of the (symmetric) square root of $\RMAT$.
Under the compound symmetric model, $\ichol{\RMAT}$ is order preserving:
if $\vect{a} = \ichol{\RMAT}\vect{b}$ and $b_i \le b_j$ then $a_i \le a_j$.
Thus the \kth{\nstrat} element of $\sqrt{\ssiz}\ichol{\RMAT}\svsr$ is the largest element
under our assumed ordering.
This is all to say we will focus mostly on the $\RMAT=\eye$ case in our exposition.

Given $\svsr$ which is approximately $\pvsnr$ plus a multivariate noise with covariance $\frac{1}{\ssiz}\eye$,
how can we estimate $\psnr[\nstrat]$ in a way that is unbiased or has small mean square error?
If all the elements of \pvsnr were the same, say equal to \psnr[0], we could avail ourselves of the result of 
\citeauthor{bailey2014pseudomathematics} on the expected value of the maximum of independent Gaussians.  \cite{bailey2014pseudomathematics}
In our formulation this can be expressed as 
\begin{equation}
  \Eof{\ssr[\nstrat]} \approx \psnr[0] + \ssiz^{-\halff}\wrapParens{\wrapParens{1-\mascher}\pinorm[1 - \frac{1}{\nstrat}] + \mascher\pinorm[1 - \frac{1}{\nstrat e}]},
\end{equation}
where $\mascher \approx 0.5772157$ is the Euler-Mascheroni constant.
Thus a nearly debiased estimate could be had by subtracting the second summand from $\ssr[\nstrat]$.
Of course, if all the elements of \pvsnr were equal we would not have to rely on the expected value of the extremum, 
we could just estimate \psnr[0] via the average (or ``grand mean'') of the elements of $\svsr$ defined as
$$
\ssravg = \frac{1}{\nstrat} \sum_i \ssr[i].
$$

We do not expect either of those estimators to be very good except in cases where
the variation in \pvsnr is small compared to the noise in \svsr, which is essentially $\ssiz^{-\halff}$.
Note that it is fairly common to collect backtests of length between around one and ten years, which corresponds
to $\ssiz$ varying from around 250 to around 2,500. 
Thus the standard deviation of noise in \svsr is somewhere around $0.02$ to 
$0.06$.

When the variation in \svsr is not small relative to $\ssiz^{-\halff}$, we need better estimators.
One simple classical approach is via the James-Stein estimator.  \cite{james1961estimation}
Suppose the elements of \pvsnr were randomly generated with variance $\psnrsigsq$,
then we observe $\svsr$ with variance $1/\ssiz$ on top of that. 
Then the unconditional variance of the $\ssr[i]$ would be their sum, $\psnrsigsq + 1/\ssiz$.
If you could rescale the elements of \svsr towards the (unknown) mean value of the $\pvsnr$ by a certain
factor, then the same $\svsr$ would have the same variance as the sample.
This handwavey explanation would lead one to the wrong scaling factor, however.
The actual James Stein estimator, which has lower mean square error than the usual least squares estimator (\ie \svsr itself in this case),
has one compute the shrinkage factor
$$
s = \wrapParens{1 - \frac{\wrapParens{\nstrat - 2}\oneby{\ssiz}}{\norm{\svsr}^2}}^{+},
$$
where $x^{+}$ is the maximum of $x$ and $0$, or the ``positive part'' operator.
One would then use $s \svsr$ as a better estimate of \pvsnr, in the least squares sense.
For our problem that means using $s \ssr[\nstrat]$ as an estimate of \pvsnr[\nstrat].
Rather than shrink to zero, we wish our estimator to be equivariant to locational shifts, so instead we shrink to the grand mean.
This means we compute
$$
s_v = \wrapParens{1 - \frac{\wrapParens{\nstrat - 2}\oneby{\ssiz}}{\norm{\svsr - \ssravg\vone}^2}}^{+},
$$
then use
$$
\ssr[JS] = \ssravg + s_v \wrapParens{\ssr[\nstrat] - \ssravg}
$$
to estimate $\psnr[\nstrat]$.

A number of other estimators that we consider can be expressed in a similar form.
These are based on the idea of ``thresholding''.
Let 
$$
\funcit{g}{x, t} \defeq \sign{x}\abs{x-t}^{+}
$$
be the thresholding function.
It shrinks anything smaller than $t$ in absolute value to zero, while values larger than that threshold are shifted towards zero by that amount.
Thresholding estimators are completely described by the way that they construct the threshold $t$.
That is, a thresholding estimator of \psnr[\nstrat] is computed as
$$
\ssravg + \funcit{g}{\ssr[\nstrat], \funcit{t}{\svsr - \ssravg\vone, \ssiz}},
$$
for some function $\funcit{t}{\cdot, \cdot}$.

The SURE estimator of \citeauthor{Donoho01121995} uses as $\funcit{t}{\cdot, \cdot}$ a result of Stein to estimate the 
threshold which would give the least squares best estimate of the entire vector. \cite{Donoho01121995}
For our problem, focused on only one element of the vector, this might not be optimal.
We envision exploring variants of this estimator in future revisions of this paper.

The Empirical Bayes estimate of \citeauthor{johnstone2004needles} can
also be expressed as a threshold estimator.  \cite{johnstone2004needles,JSSv012i08}
It starts from a prior distribution on the elements of $\pvsnr$ with a point mass at zero, and 
some distribution on the non-zero elements.
The observed data are then integrated, then the posterior median is used for the threshold.
\cite{reid2017post,04f9ad2d-909a-34da-ad09-b5ac91b665b6}

The GMLEB estimator of 
\citeauthor{jiang2009generalmaximumlikelihoodempirical}
is another threshold estimator.  \cite{jiang2009general}
This estimator is considerably harder to describe, and we refer the reader to the original paper.
We used the GMLEB code from \citeauthor{reid2017post} \cite{reid2017post}

Another approach to the problem is via the ``polyhedral lemma'' of \citeauthor{lee2013exact} \cite{lee2013exact}
Previously we applied this result to the problem of performing inference for this very same problem setup.  \cite{pav2019maxsharpe}
So one very simple estimator would be to compute the $0.5$ confidence bound using the truncated normal distribution
that arises from the problem setup.
For the simple case of $\RMAT=\eye$, this polyhedral lemma states that, conditional on 
$\ssr[1] \le \ssr[2] \le \ldots \le \ssr[\nstrat]$, the random variable
$$
u = \frac{\pnorm[\sqrt{\ssiz}\wrapParens{\ssr[\nstrat] - \psnr[\nstrat]}] - 
\pnorm[\sqrt{\ssiz}\wrapParens{\ssr[\nstrat-1] - \psnr[\nstrat]}]}{1 - 
\pnorm[\sqrt{\ssiz}\wrapParens{\ssr[\nstrat-1] - \psnr[\nstrat]}]}
$$
is uniformly distributed.
So the polyhedral median estimator is the value, found numerically, such that plugged in for the unknown $\psnr[\nstrat]$ gives a value of $0.5$ for $u$. 
That is we find $\psnr[0.5]$ to solve for 
$$
\half = \frac{\pnorm[\sqrt{\ssiz}\wrapParens{\ssr[\nstrat] - \psnr[0.5]}] - 
\pnorm[\sqrt{\ssiz}\wrapParens{\ssr[\nstrat-1] - \psnr[0.5]}]}{1 - 
\pnorm[\sqrt{\ssiz}\wrapParens{\ssr[\nstrat-1] - \psnr[0.5]}]}.
$$

\citeauthor{reid2017post} also start from the polyhedral lemma, but use it to compute a maximum likelihood estimator for \psnr[\nstrat].  \cite{reid2017post}
The starting point is that, conditional on the observed ordering, the variable $\sqrt{\ssiz}\wrapParens{\ssr[\nstrat] - \psnr[\nstrat]}$
follows a truncated normal distribution and has probability density function:
$$
\funcit{f}{x; \psnr[\nstrat]} = \sqrt{\ssiz}\frac{\indic{\sqrt{\ssiz}\wrapParens{\ssr[\nstrat-1] - \psnr[\nstrat]} \le x}\dnorm[x]}{1 - \pnorm[\sqrt{\ssiz}\wrapParens{\ssr[\nstrat-1] - \psnr[\nstrat]}]}.
$$
Given just the single observation $\ssr[\nstrat]$, we find, numerically, the value of \psnr[\nstrat] that maximizes this likelihood.
That is we compute the MLE as
$$
\argmax_{\psnr} \funcit{f}{\sqrt{\ssiz}\wrapParens{\ssr[\nstrat] - \psnr}; \psnr}.
$$
Pace \citeauthor{reid2017post} we found this estimator to be quite unstable in our simulations.
One distinction is that \citeauthor{reid2017post} considered the problem of estimating the $m$ largest elements of a noisy vector in absolute value,
which is somewhat different from our problem and perhaps less sensitive to small values of $\ssr[\nstrat] - \ssr[\nstrat-1]$.




\section{Simulations}

We perform a number of Monte Carlo simulations of the strategy backtest-select-estimate pipeline we describe above.
We evaluate the different estimators in two different ways:
in one we simply compute the bias and root mean square error (RMSE) of the estimator compared to \psnr[\nstrat];
in the other we simulate using one of these estimators to select from several different sources of strategies, 
and compute the correlation of the estimators against the \psnr[\nstrat].

At its heart a single simulation consists of the following:
\begin{compactenum}
\item Either the \pvsnr is given, or it is generated according to some distribution; when randomly generated, we usually control the \psnrsigsq.
\item We then generate \ssiz days of independent normally distributed returns for \nstrat assets with zero mean and unit variance.
\item To each column we add the value of \pvsnr so each column is a $\ssiz$-vector with \txtSNR equal to the corresponding element of \pvsnr.
\item We compute the \txtSR of each column.
\item We apply the various estimators to the vector \svsr and record them. We also record the \psnr[\nstrat].
\end{compactenum}
We typically perform many simulations for each setting of the \pvsnr, \ssiz and \nstrat.
Over those many simulations we compute the bias and RMSE of the estimator compared to the \txtSNR of the strategy selected for having the highest \txtSR.
We then plot the bias or RMSE versus, say, the \psnrsigsq, or versus \nstrat or \ssiz and so on.

\paragraph{Forms of the population vector}

We consider a number of different layouts for the vector \pvsnr.
Some of these are random and the \pvsnr is generated afresh for each simulation.
In others the vector \pvsnr is fixed at one value across many simulations.
The layouts are:

\begin{compactitem}
\item Gaussian: The \pvsnr is drawn from a Gaussian distribution with mean zero and variance \psnrsigsq.
\item Uniform: The \pvsnr is drawn from a continuous uniform distribution with mean zero and variance \psnrsigsq. That is, elements of \pvsnr are drawn uniformly from
  $-\sqrt{3}\psnrsig$ to $\sqrt{3}\psnrsig$.
\item Bimodal: The \pvsnr takes value $\pm \psnrsig$ with probability \half.
\item all-good: The \pvsnr equals $\psnr[0]\vone$ for some $\psnr[0]$. This is a non-random configuration, constant across all simulations with this layout.
  We test this layout with various values of $\psnr[0]$ mainly as a check that our estimators are location invariant.
\item one-good: One element of \pvsnr is \psnr[0] while the rest are $-\psnr[0]$ for some $\psnr[0]$. This is a non-random configuration.
  When $\psnr[0]$ is relatively large compared to $\ssiz^{-\halff}$ we are likely to select the ``good'' strategy, but some of our estimators are likely to be very biased.
\item two-good: Two elements of \pvsnr are \psnr[0] while the rest are $-\psnr[0]$ for some $\psnr[0]$. This is a non-random configuration. 
\end{compactitem}

\paragraph{Estimators}

We consider the following estimators:

\begin{compactitem}
\item Biased: This is \ssr[\nstrat], the maximal \txtSR of all those observed. We include this as a benchmark for the alternative estimators.
\item Grand Mean: This is \ssravg, the average of all the observed \txtSR values.
\item James-Stein: The James-Stein estimator, shrunk towards the grand mean.
\item GMLEB: The estimator of \citeauthor{jiang2009generalmaximumlikelihoodempirical} shrunk to the grand mean.
\item SURE: The estimator of \citeauthor{Donoho01121995} shrunk to the grand mean.
\item Polyhedral Median: This is the 0.5 confidence bound based on the polyhedral lemma.
\item Polyhedral MLE: The MLE estimator of \citeauthor{reid2017post}
\item Expected Max: This is \ssr[\nstrat] minus the expected value of the maximum found by \citeauthor{bailey2014pseudomathematics}
\item Empirical Bayes: The estimator of \citeauthor{johnstone2004needles}, shrunk to the grand mean.
We use the code from the \Rpackage{EbayesThresh} \Rlang package. \cite{EbayesThresh}
\end{compactitem}

\subsection{Bias and Square Error Results}

First we performed simulations using the Gaussian, Uniform and Bimodal distributions for \pvsnr.
In these we let \psnrsig range from 
$0\yrto{-\halff}$ to 
$1\yrto{-\halff}$.
We let \nstrat take values 10, 100, 1000.
We varied \ssiz to represent between 
$2$ and
$8$ years at $252$ days per year.
For each setting we perform $500$ simulations.
In \figref{sim_rnd_plotz_bias} we plot the bias versus $\psnrsig$,
and in \figref{sim_rnd_plotz_rmse} we plot the RMSE versus $\psnrsig$,
fixing \ssiz at 4 years, and plotting for the various values of \nstrat tested.

Based on early results of these studies, we removed the Polyhedral Median and MLE estimators from the bias and RMSE plots.
Both consistently exhibited high RMSE, much higher than the other estimators.
The issue arises when the $\ssr[\nstrat]$ and $\ssr[\nstrat-1]$ are very close to each other;
in this case the truncated normal that arises from the polyhedral lemma is very close to its limiting
value which causes very extreme values of the estimators, see \secref{polyhedral_lemma} for more details.
For small numbers of simulations you might not hit such a case, but we did when performing $500$ simulations.
We will consider these two estimators in the following subsection, but will see they perform poorly there as well.

In those plots we see that the Biased estimator is indeed biased, and often has the highest RMSE.
The Grand Mean is unbiased and efficient when $\psnrsig=0$, but otherwise negatively biased and has high RMSE.
The James Stein estimator is, to the resolution of the plot, effectively unbiased for the Gaussian layout and has the lowest RMSE in that case as well;
when \pvsnr is uniform or bimodal, James Stein exhibits some positive bias and increased RMSE for larger \psnrsig.
GMLEB is slightly positively biased for uniform and bimodal \pvsnr, but appears to have RMSE which is flat with respect to \pvsnr.
The Expected Max estimator exhibits negative bias for Gaussian \pvsnr, but is better behaved for uniform and bimodal layouts.

\begin{knitrout}\small
\definecolor{shadecolor}{rgb}{0.969, 0.969, 0.969}\color{fgcolor}\begin{figure}[ht]
\includegraphics[width=0.975\textwidth,height=0.691\textwidth]{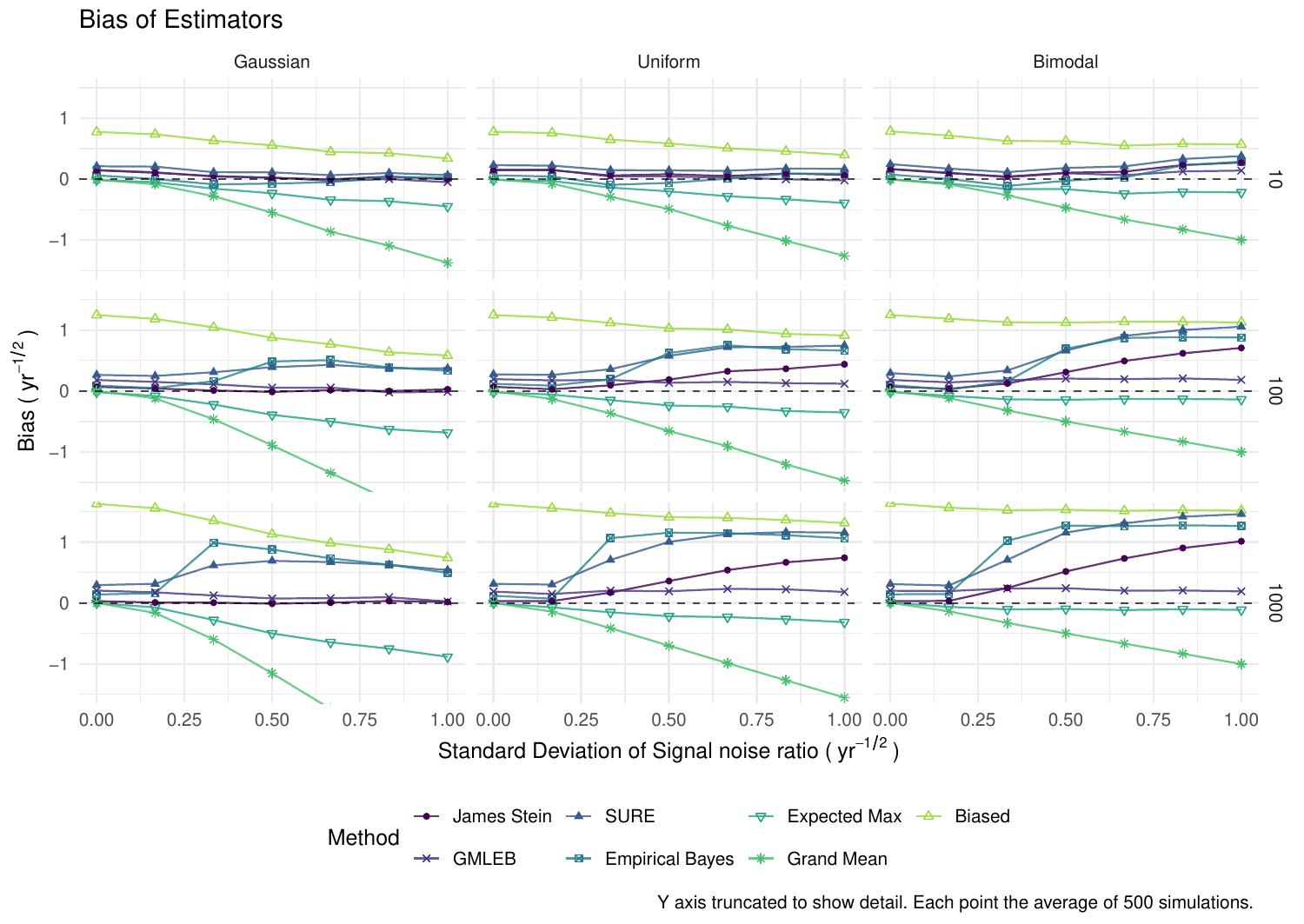} \caption[The empirical biases of the tested estimators are shown versus \psnrsig, the spread of the \txtSNR values]{The empirical biases of the tested estimators are shown versus \psnrsig, the spread of the \txtSNR values. Facet columns show the three different random layouts of the \pvsnr values, facet rows show the number of strategies, \nstrat. All simulations tested under $\ssiz=1008$.}\label{fig:sim_rnd_plotz_bias}
\end{figure}

\end{knitrout}

\begin{knitrout}\small
\definecolor{shadecolor}{rgb}{0.969, 0.969, 0.969}\color{fgcolor}\begin{figure}[ht]
\includegraphics[width=0.975\textwidth,height=0.691\textwidth]{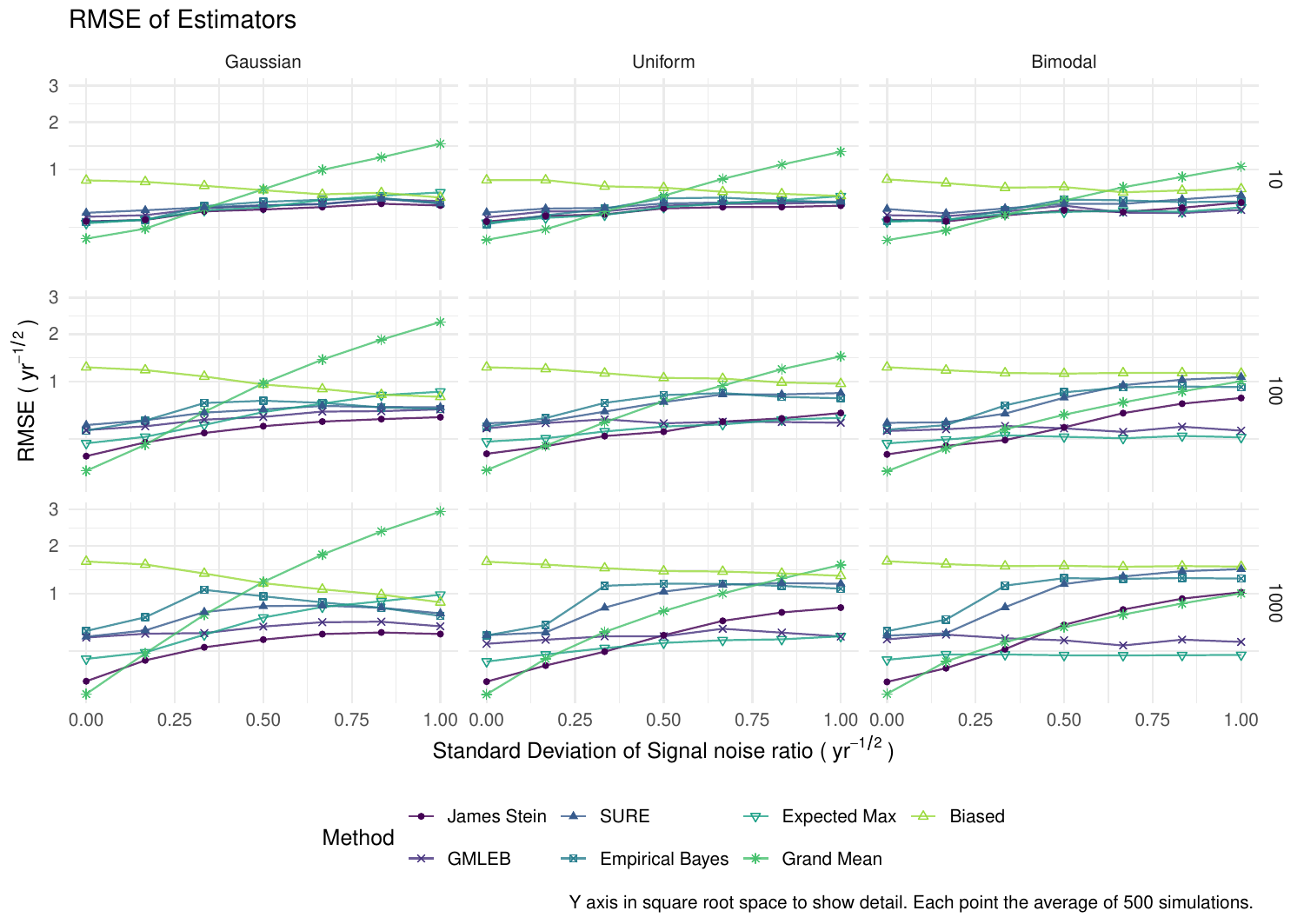} \caption[The empirical RMSE values of the tested estimators are shown versus \psnrsig, the spread of the \txtSNR values]{The empirical RMSE values of the tested estimators are shown versus \psnrsig, the spread of the \txtSNR values. Facet columns show the three different random layouts of the \pvsnr values, facet rows show the number of strategies, \nstrat. All simulations tested under $\ssiz=1008$.}\label{fig:sim_rnd_plotz_rmse}
\end{figure}

\end{knitrout}

In \figref{sim_rnd_plotz_bias_II} we plot the bias versus $\psnrsig$,
and in \figref{sim_rnd_plotz_rmse_II} we plot the RMSE versus $\psnrsig$
for these simulations, but fixing $\nstrat=100$ and varying \ssiz in the different facet rows.
This shows how error depends on the sample size when the number of strategies is fixed.
The middle row of \figref{sim_rnd_plotz_bias_II} should be the same as the middle row of \figref{sim_rnd_plotz_bias},
and similarly for \figref{sim_rnd_plotz_rmse_II} and \figref{sim_rnd_plotz_rmse}.
One thing to observe in these plots is how the estimators respond to varying \ssiz:
the Grand Mean seems almost unaffected, which makes sense; most of the other estimators
appear to have less bias for larger \ssiz, but this is tricky to quantify.

In \figref{sim_rnd_plotz_rmse_IIb} we reverse this relationship and plot the RMSE versus \ssiz in years
with lines for selected values of \psnrsig. 
We can see more clearly here that all of the methods seem to improve (have lower RMSE) with increasing
sample size except for the Grand Mean, and perhaps the James Stein method for bimodal layout and large \psnrsig.

\begin{knitrout}\small
\definecolor{shadecolor}{rgb}{0.969, 0.969, 0.969}\color{fgcolor}\begin{figure}[ht]
\includegraphics[width=0.975\textwidth,height=0.691\textwidth]{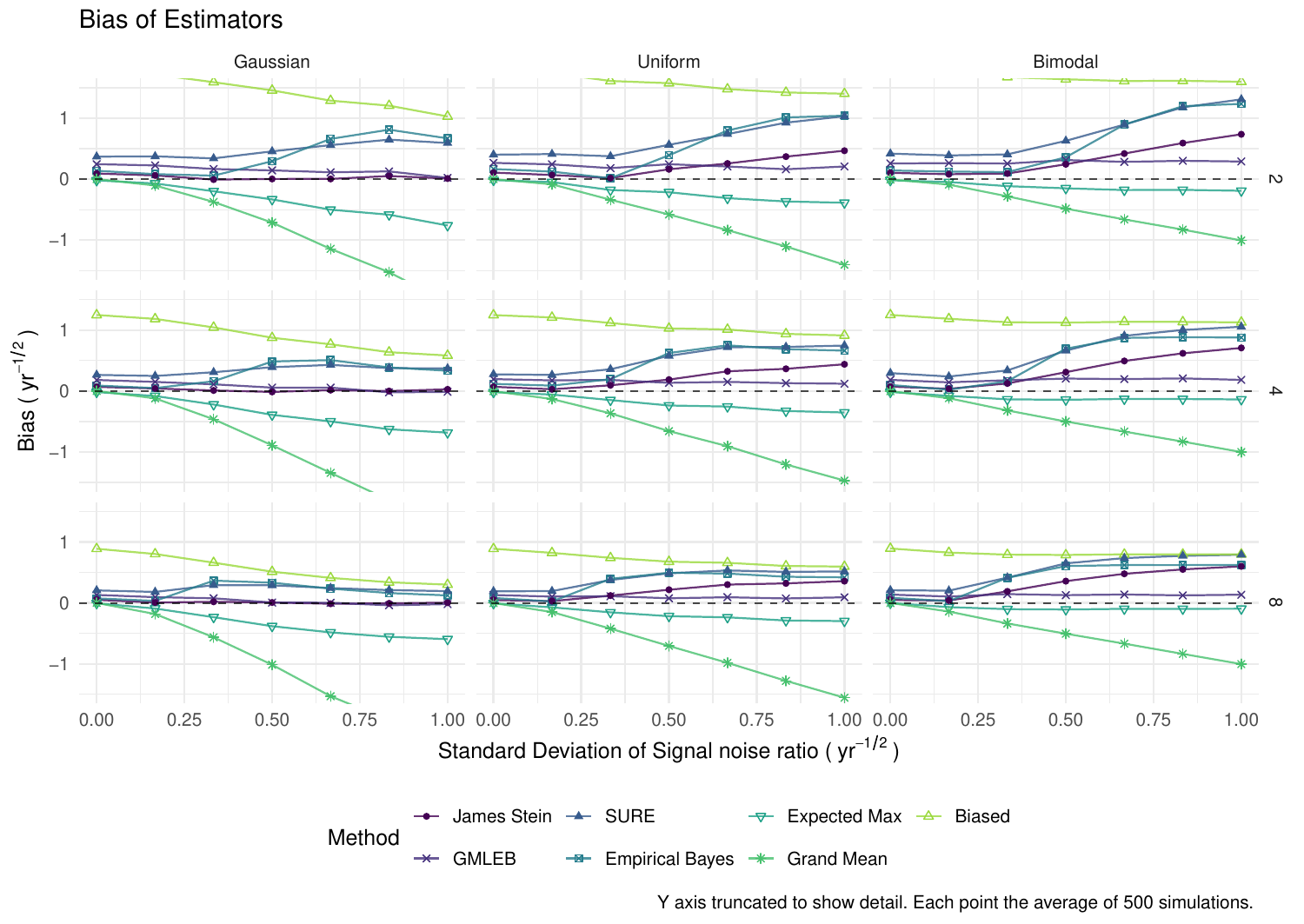} \caption[The empirical biases of the tested estimators are shown versus \psnrsig, the spread of the \txtSNR values]{The empirical biases of the tested estimators are shown versus \psnrsig, the spread of the \txtSNR values. Facet columns show the three different random layouts of the \pvsnr values, facet rows show the sample size, \ssiz, in years, assuming 252 days per year. All simulations tested under $\nstrat=100$.}\label{fig:sim_rnd_plotz_bias_II}
\end{figure}

\end{knitrout}

\begin{knitrout}\small
\definecolor{shadecolor}{rgb}{0.969, 0.969, 0.969}\color{fgcolor}\begin{figure}[ht]
\includegraphics[width=0.975\textwidth,height=0.691\textwidth]{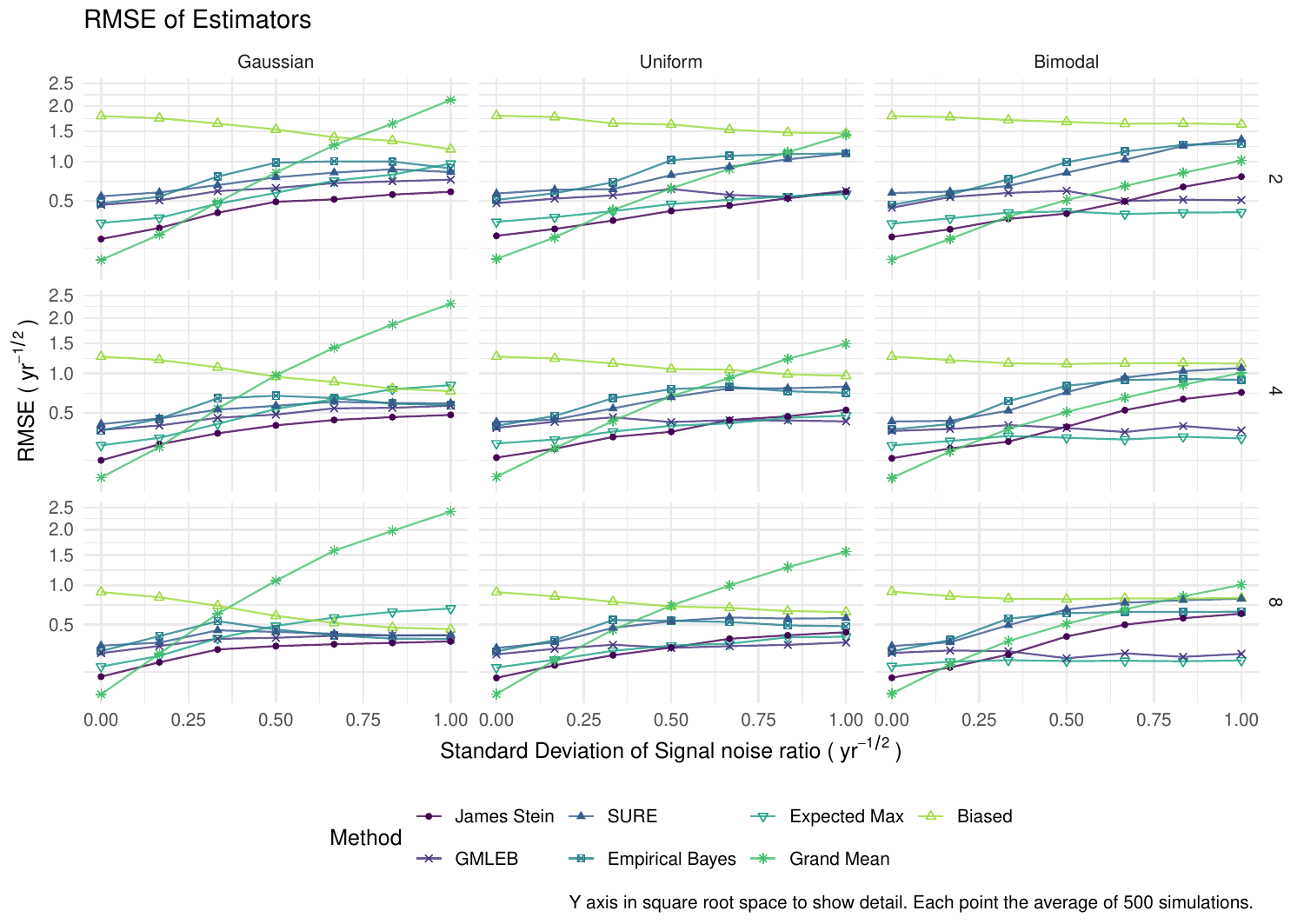} \caption[The empirical RMSE values of the tested estimators are shown versus \psnrsig, the spread of the \txtSNR values]{The empirical RMSE values of the tested estimators are shown versus \psnrsig, the spread of the \txtSNR values. Facet columns show the three different random layouts of the \pvsnr values, facet rows show the sample size, \ssiz, in years, assuming 252 days per year. All simulations tested under $\nstrat=100$.}\label{fig:sim_rnd_plotz_rmse_II}
\end{figure}

\end{knitrout}

\begin{knitrout}\small
\definecolor{shadecolor}{rgb}{0.969, 0.969, 0.969}\color{fgcolor}\begin{figure}[ht]
\includegraphics[width=0.975\textwidth,height=0.691\textwidth]{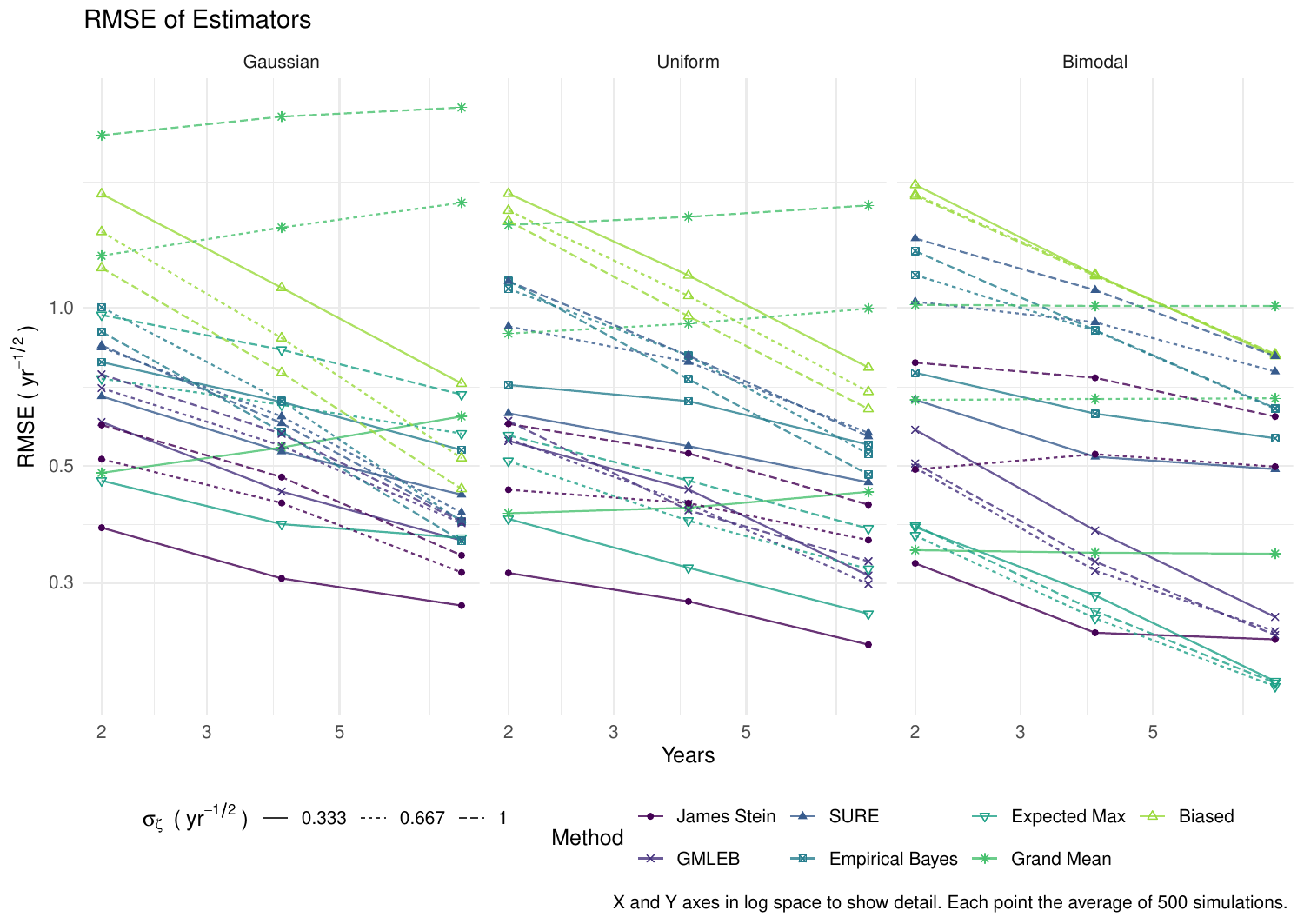} \caption[The empirical RMSE values of the tested estimators are shown versus \ssiz, in years, assuming 252 days per year]{The empirical RMSE values of the tested estimators are shown versus \ssiz, in years, assuming 252 days per year. Facet columns show the three different random layouts of the \pvsnr values. All simulations tested under $\nstrat=100$. Lines are shown for selected values of \psnrsig. }\label{fig:sim_rnd_plotz_rmse_IIb}
\end{figure}

\end{knitrout}

How can we make sense of the various plots above?
One way to summarize them is via the \emph{performance plots} of 
\citeauthor{dolan2004benchmarkingoptimizationsoftwareperformance}
\cite{dolan2004benchmarkingoptimizationsoftwareperformance}
We collect the simulations from above, then for a given choice of \ssiz, \nstrat, \psnrsig
and the layout distribution, we compute the ratio of the RMSE of an estimator
to the minimum RMSE over all estimators at that setting of the parameters.
We want this ratio to be as close to 1.0 as possible.
In the performance plot, in \figref{sim_plot_perf},
we plot the empirical CDF of this ratio for each method, with facets for the
different layout distributions.
In a performance plot, you wish to select a method which is ``up and to the left'' in the plot.
In our case James Stein clearly dominates in the Gaussian layout, is largely tied with Expected Max or slightly better
for Uniform layout, and somewhat worse than Expected Max and GMLEB for Bimodal layout.
A number of the methods show high regret in the Bimodal case, including SURE and Empirical Bayes.

We note that it is hard to take these results as conclusive,
since we only test a handful of different configurations of the relevant parameters,
and those might not be sampled in a way that is representative of what one expects in reality.

\begin{knitrout}\small
\definecolor{shadecolor}{rgb}{0.969, 0.969, 0.969}\color{fgcolor}\begin{figure}[ht]
\includegraphics[width=0.975\textwidth,height=0.487\textwidth]{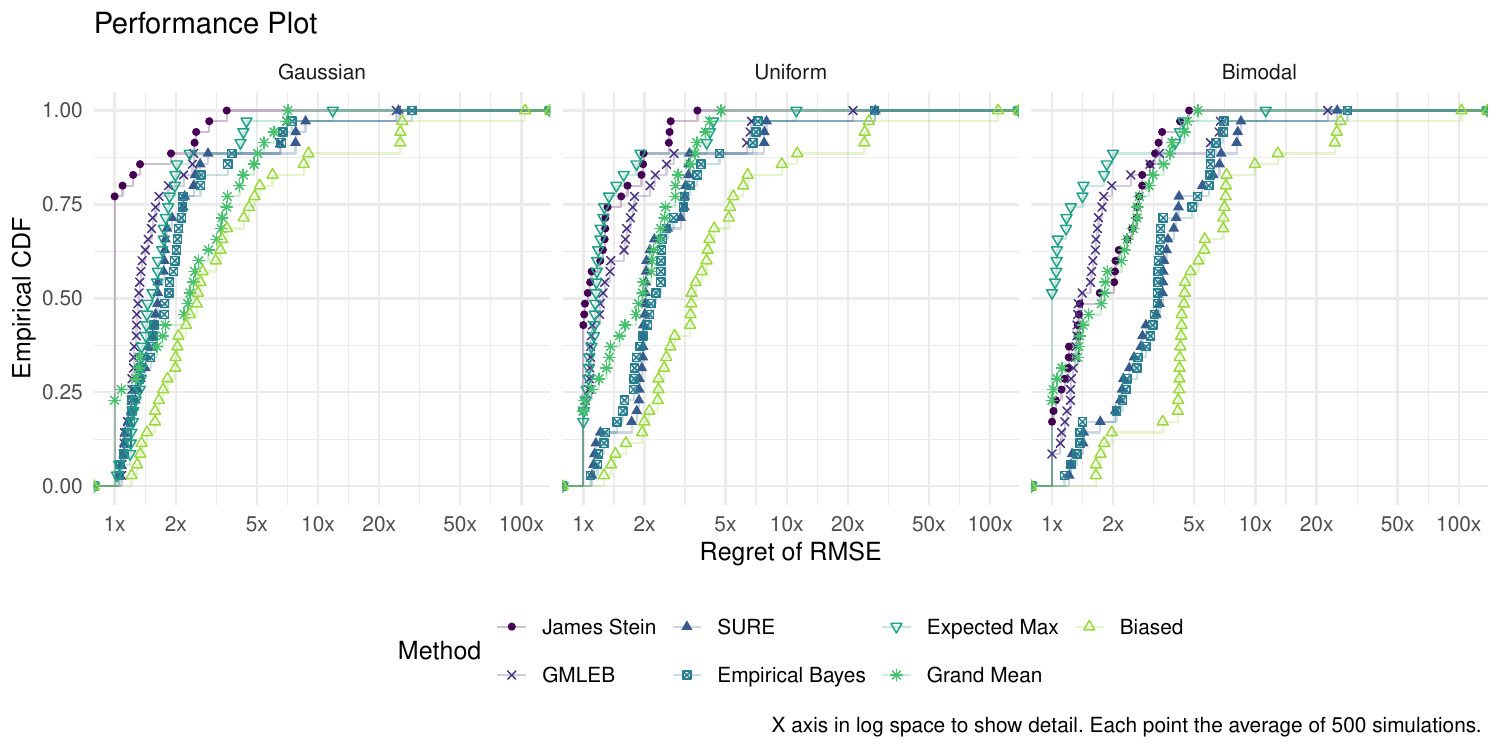} \caption[Empirical CDFs of the regret of each method are plotted in these performance plots]{Empirical CDFs of the regret of each method are plotted in these performance plots. Regret is the ratio of the RMSE of a method to the minimum RMSE of all methods for that parameter setting. }\label{fig:sim_plot_perf}
\end{figure}

\end{knitrout}

\clearpage

In \figref{sim_plotz_bias} we plot the bias versus $\psnr[0]$,
and in \figref{sim_plotz_rmse} we plot the RMSE versus $\psnr[0]$,
where \psnr[0] is the \txtSNR of the good strategy, for the all-good, one-good, and two-good layouts of \pvsnr.
In these we fix \ssiz at 4 years, and $\nstrat=100$.

For the all-good case we see, as hoped, that the bias and RMSE are essentially flat with respect to \psnr[0].
This means that our estimators are equivariant with respect to location shifts, as desired.
We also see that Grand Mean has the lowest RMSE, followed by James Stein, then Expected Max.
This is somewhat surprising as James Stein exhibits some positive bias in the all-good case,
as seen in \figref{sim_plotz_bias}, yet still exhibits lower RMSE than the less biased
Expected Max estimator, which must have a larger variance.
In the all-good case, the James Stein estimator will sometimes shrink all the way to the Grand Mean,
but sometimes will only shrink part of the way, which explains the observed positive bias.

All-good is an unlikely configuration for \pvsnr, but it is the limiting case where the variation of \pvsnr is small compared to $\ssiz^{-\halff}$.
The one-good and two-good cases are also somewhat unlikely, especially for larger \psnr[0].
For the one-good layout and large \psnr[0], the Biased estimator has lowest RMSE and is nearly unbiased, since effectively the
bad strategies have very little chance of being selected.
We can see the change in behavior clearly in the bias plots: after some point most of the estimators seem to 
reach an asymptotic value, except for Grand Mean, which gets progressively worse with \psnr[0].
The GMLEB appears to have low regret across \psnr[0] for the one-good and two-good cases.
If somehow one knew there were very few good strategies and a large number of bad ones,
GMLEB would be the recommended estimator, unless the gap between good and bad were known to be very large,
in which case the simple Biased estimator is to be used with caution.

\begin{knitrout}\small
\definecolor{shadecolor}{rgb}{0.969, 0.969, 0.969}\color{fgcolor}\begin{figure}[ht]
\includegraphics[width=0.975\textwidth,height=0.691\textwidth]{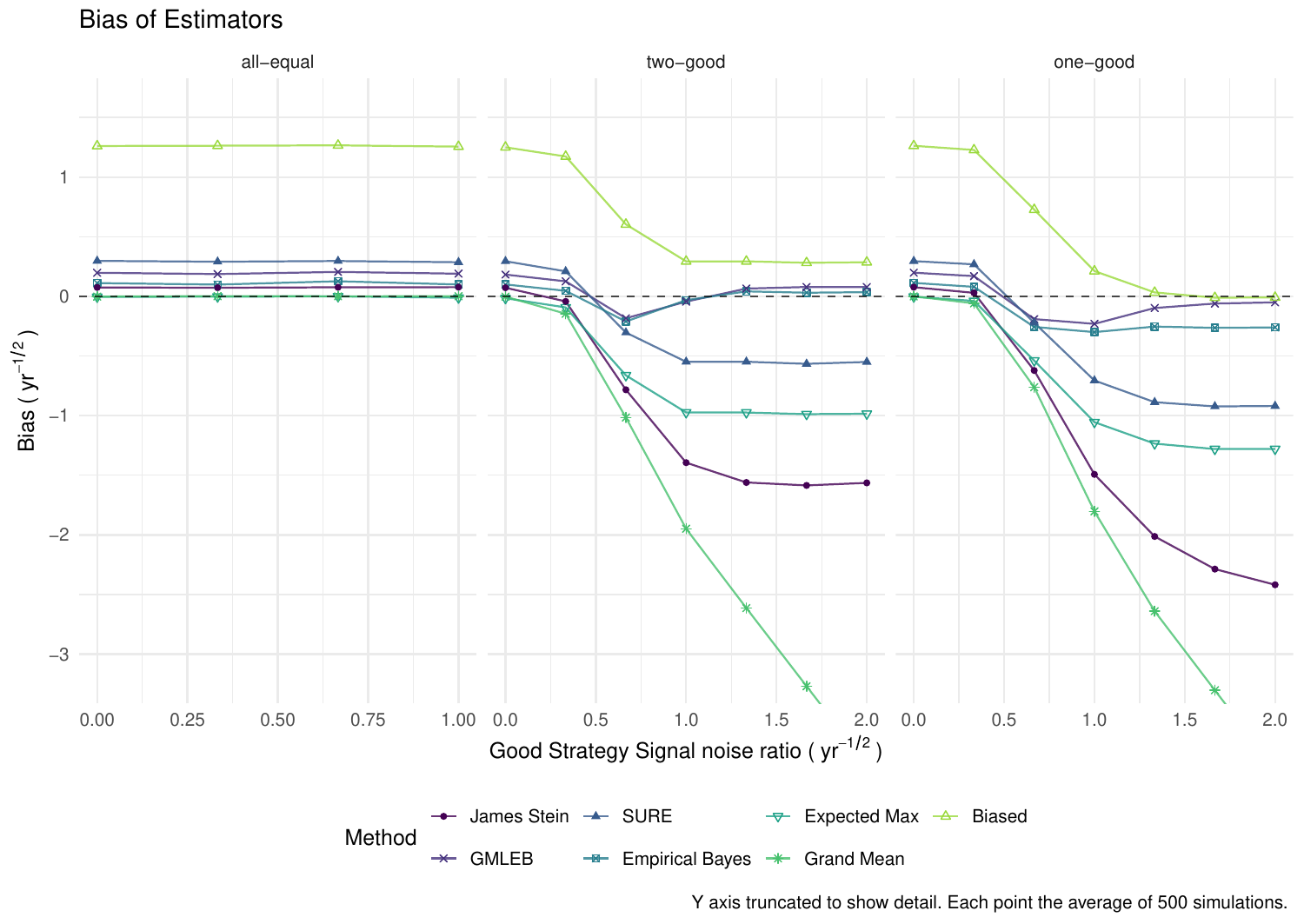} \caption[The empirical biases of the tested estimators are shown versus the \txtSNR of the good strategy for the fixed layout \pvsnr simulations]{The empirical biases of the tested estimators are shown versus the \txtSNR of the good strategy for the fixed layout \pvsnr simulations. Facet columns show the three different layouts of the \pvsnr values. All simulations tested under $\ssiz=1008$, and $\nstrat=100$. }\label{fig:sim_plotz_bias}
\end{figure}

\end{knitrout}

\begin{knitrout}\small
\definecolor{shadecolor}{rgb}{0.969, 0.969, 0.969}\color{fgcolor}\begin{figure}[ht]
\includegraphics[width=0.975\textwidth,height=0.691\textwidth]{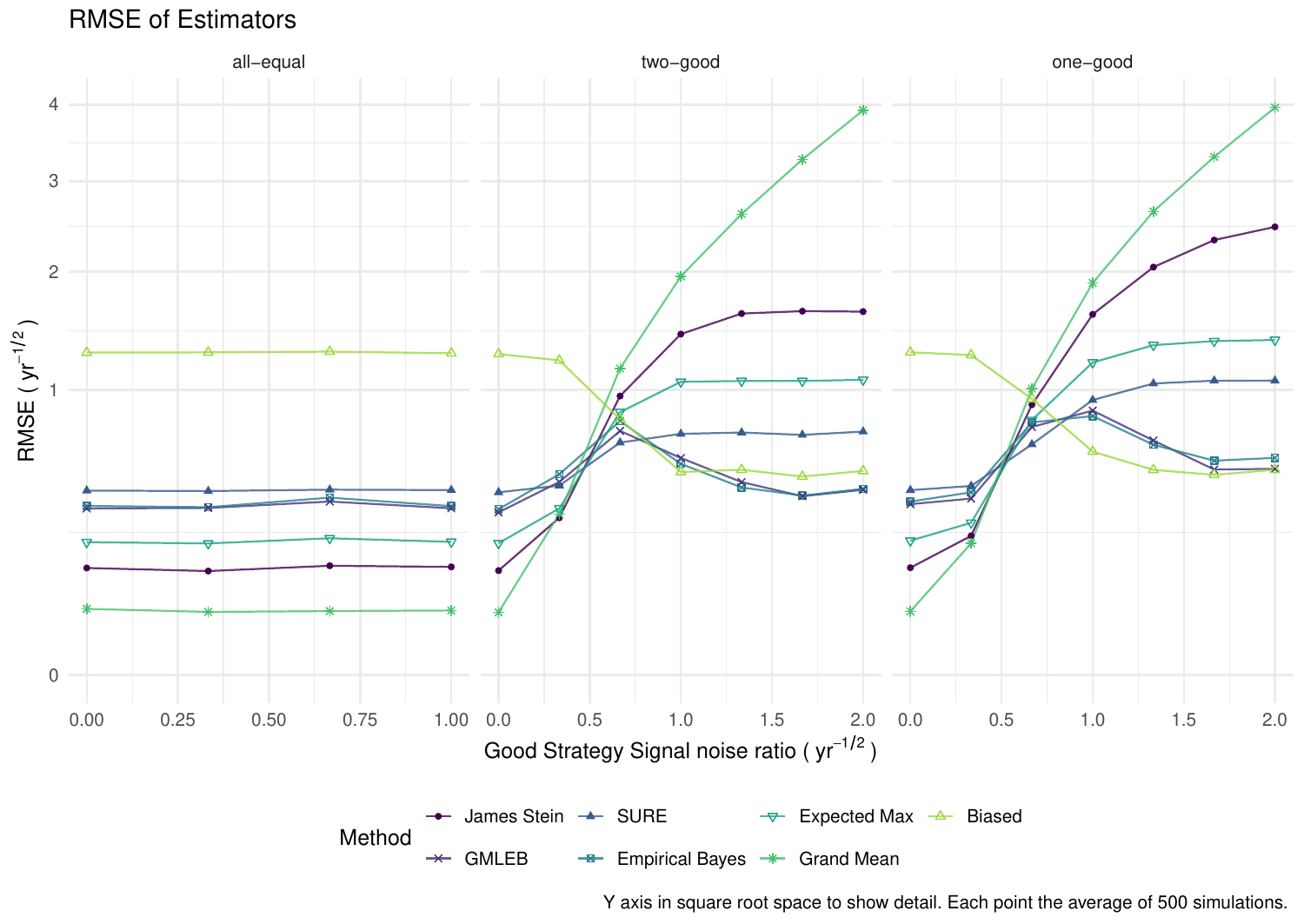} \caption[The empirical RMSE values of the tested estimators are shown versus the \txtSNR of the good strategy for the fixed layout \pvsnr simulations]{The empirical RMSE values of the tested estimators are shown versus the \txtSNR of the good strategy for the fixed layout \pvsnr simulations. Facet columns show the three different layouts of the \pvsnr values. All simulations tested under $\ssiz=1008$, and $\nstrat=100$. }\label{fig:sim_plotz_rmse}
\end{figure}

\end{knitrout}

\clearpage

\subsection{Ranking Results}

The simulations above looked at the quality of estimators \emph{qua} estimators of \psnr[\nstrat].
We wish to also examine how these estimators can be used to compare two or more selected strategies, 
each found by the process of selecting based on \txtSR.

As an illustration, 
suppose that former cryptographers Alice and Bob are now developing quantitative strategies.
Alice is \emph{adventurous}: she has many different ideas and performs many backtests.
Bob is \emph{boring}: he tests a few variants on one idea.
Both follow our process of generating some ideas, backtesting them, then selecting the one with the highest in-sample \txtSR.
Depending on the mean \txtSNR of their idea generating processes,
their \psnrsigsq, the \ssiz and \nstrat, either Alice or Bob might select a better strategy.
We wish an estimator computed on their respective \svsr vectors to be directionally correct for the \txtSNR of each of their chosen
strategies.
That is to say a good estimator would be one which is likely to identify the participant with higher \psnr[\nstrat].

To measure the association of the estimator and the population value \psnr[\nstrat], we perform a number of simulations across different configurations.
We then compute Kendall's rank correlation coefficient, $\tau$ of an estimate against the population value.
This gives some idea of the agreement of pairwise rankings of the two quantities.
We also compute Spearman's rank correlation coefficient, $\rho$ of an estimate against the population value.
This gives some idea of how likely an estimator is to correctly order a bunch of different realizations of Alice and Bob.

To test this use of estimators, we perform a bunch of simulations.
We create 500 ``corners'', where 
we pick $\psnrsig$ uniformly from 
$0\yrto{-\halff}$ to
$1\yrto{-\halff}$
and 
we select \nstrat log uniformly from 
$10$ to
$2,500$.
We cross this with \ssiz taking values from 
$0.5$ to $8$ years at $252$ days per year,
and cross this with the three different layouts, Gaussian, Uniform and Bimodal.

At each setting of the parameters we perform 8 simulations, mostly to take advantage of core parallelism.
We compute each estimator and the \psnr[\nstrat] in each simulation.
This represents a total of 
60,000 simulations.
We then compute the rank correlation coefficients of each estimator to the ground truth,
sometimes grouping by the layout or \ssiz, \nstrat, \psnrsig.

In \tabref{anb_1}, we tabulate the rank correlation coefficients for all estimators across all layouts, \ssiz, \nstrat, \psnrsig tested,
a total of 60,000 simulations.
Similarly, in \tabref{anb_2}, we tabulate the correlations grouped by layout for the three layouts, across \ssiz, \nstrat, \psnrsig. Each row is based on 20,000 simulations.
We bold the maximum value in each column, or each column and group, and methods are given in decreasing order of computed Kendall's correlation coefficient.

\begin{table}[ht]
\centering
\begin{tabular}{l|ll}
  \hline
Estimator & Kendall & Spearman \\ 
  \hline
James Stein & \textbf{0.57} & \textbf{0.76} \\ 
  GMLEB & 0.49 & 0.66 \\ 
  SURE & 0.46 & 0.64 \\ 
  Expected Max & 0.45 & 0.62 \\ 
  Empirical Bayes & 0.42 & 0.60 \\ 
  Polyhedral Median & 0.16 & 0.23 \\ 
  Polyhedral MLE & 0.15 & 0.23 \\ 
  Biased & 0.15 & 0.21 \\ 
  Grand Mean & 0.02 & 0.02 \\ 
   \hline
\end{tabular}
\caption{Rank correlation coefficients of the various tested estimators versus the \psnr[\nstrat] are shown. Rank correlations are computed across all layouts, \psnrsig, \ssiz and \nstrat. Each estimator tested on 60,000 simulations. } 
\label{tab:anb_1}
\end{table}

\begin{table}[ht]
\centering
\begin{tabular}{ll|ll}
  \hline
Estimator & Layout & Kendall & Spearman \\ 
  \hline
James Stein & Gaussian & \textbf{0.59} & \textbf{0.78} \\ 
  GMLEB & Gaussian & 0.51 & 0.69 \\ 
  Expected Max & Gaussian & 0.50 & 0.69 \\ 
  SURE & Gaussian & 0.49 & 0.68 \\ 
  Empirical Bayes & Gaussian & 0.47 & 0.65 \\ 
  Polyhedral Median & Gaussian & 0.23 & 0.33 \\ 
  Polyhedral MLE & Gaussian & 0.23 & 0.33 \\ 
  Biased & Gaussian & 0.18 & 0.26 \\ 
  Grand Mean & Gaussian & 0.03 & 0.04 \\ 
   \hline
James Stein & Uniform & \textbf{0.58} & \textbf{0.76} \\ 
  GMLEB & Uniform & 0.47 & 0.64 \\ 
  SURE & Uniform & 0.45 & 0.63 \\ 
  Expected Max & Uniform & 0.45 & 0.62 \\ 
  Empirical Bayes & Uniform & 0.41 & 0.59 \\ 
  Polyhedral Median & Uniform & 0.14 & 0.20 \\ 
  Polyhedral MLE & Uniform & 0.14 & 0.20 \\ 
  Biased & Uniform & 0.13 & 0.18 \\ 
  Grand Mean & Uniform & 0.01 & 0.02 \\ 
   \hline
James Stein & Bimodal & \textbf{0.62} & \textbf{0.8} \\ 
  GMLEB & Bimodal & 0.49 & 0.65 \\ 
  SURE & Bimodal & 0.46 & 0.64 \\ 
  Expected Max & Bimodal & 0.40 & 0.57 \\ 
  Empirical Bayes & Bimodal & 0.38 & 0.55 \\ 
  Biased & Bimodal & 0.14 & 0.20 \\ 
  Polyhedral Median & Bimodal & 0.09 & 0.13 \\ 
  Polyhedral MLE & Bimodal & 0.08 & 0.11 \\ 
  Grand Mean & Bimodal & 0.00 & 0.00 \\ 
   \hline
\end{tabular}
\caption{Rank correlation coefficients of the various tested estimators versus the \psnr[\nstrat] are shown. Rank correlations are computed grouped by layout, across all \psnrsig, \ssiz and \nstrat. Each estimator tested on 20,000 simulations. } 
\label{tab:anb_2}
\end{table}

From these simulations, 
we also plotted the empirical correlation of the estimators to \psnr[\nstrat] versus the dimensions of
\ssiz, \nstrat and \psnrsig.
In \figref{anb_plot_III} we plot the correlations against \ssiz, across all layouts and values of \nstrat and \psnrsig.
The data for this plot are tabulated in \tabref{anb_3} in \secref{additional_tables_and_figures} in the appendix.
Points are based on 12,000 simulations.
In \figref{anb_plot_IV} we plot the correlations against ranges of \nstrat, across all layouts and values of \ssiz and \psnrsig;
the companion data are in \tabref{anb_4}.
Points are based on from 10,440 to 14,400 simulations.
In \figref{anb_plot_V} we plot the correlations against ranges of \psnrsig, across all layouts and values of \ssiz and \nstrat;
the companion data are in \tabref{anb_5}.
Points are based on from 14,160 to 15,360 simulations.


In these plots and tables we see that the James Stein estimator dominates the other estimators, for both Kendall's and Spearman's correlation coefficients,
for the overall results and in nearly every subgroup we consider here.
The only exception to this is the large \psnrsig group
in \tabref{anb_5}, where James Stein is bested by GMLEB and Expected Max. 
Looking at the RMSE values for larger \psnrsig and non-Gaussian layouts in 
\figref{sim_rnd_plotz_rmse}, this is not a surprising finding.
GMLEB is consistently highly ranked in these tables as well,
typically taking the second place, though sometimes trading off with SURE or, rarely, the Expected Max.

We note that when \psnrsig is small, 
all the rank correlation coefficients are fairly low,
as seen in \figref{anb_plot_V}.
When the spread in skill is low, selection is heavily loaded on luck, and there is not much we can expect from these estimators.
We would point out, however, that rank correlation coefficients of estimators seem to increase in \ssiz,
as seen in \figref{anb_plot_III}. 
If one is stuck with a process that has low \psnrsig, increasing sample size would seem to improve the discriminating power of the
estimators, as well as increase the probability of picking a good strategy.
The rank correlation also seems to be generally increasing with \nstrat,
as seen in \figref{anb_plot_IV}.
We had expected, to the contrary, that larger \nstrat would present challenges to the estimators.
Perhaps this effect is driven by an increased spread in achieved \txtSNR.

Another factor evident in comparing \figref{anb_plot_V} to the other plots here is the overall lower rank correlation
coefficients when \psnrsig is fixed.
It would seem a fair amount of the discriminating power of these estimators seen in the other plots comes from
recognizing the overall higher spread in the \svsr values.

\begin{knitrout}\small
\definecolor{shadecolor}{rgb}{0.969, 0.969, 0.969}\color{fgcolor}\begin{figure}[ht]
\includegraphics[width=0.975\textwidth,height=0.691\textwidth]{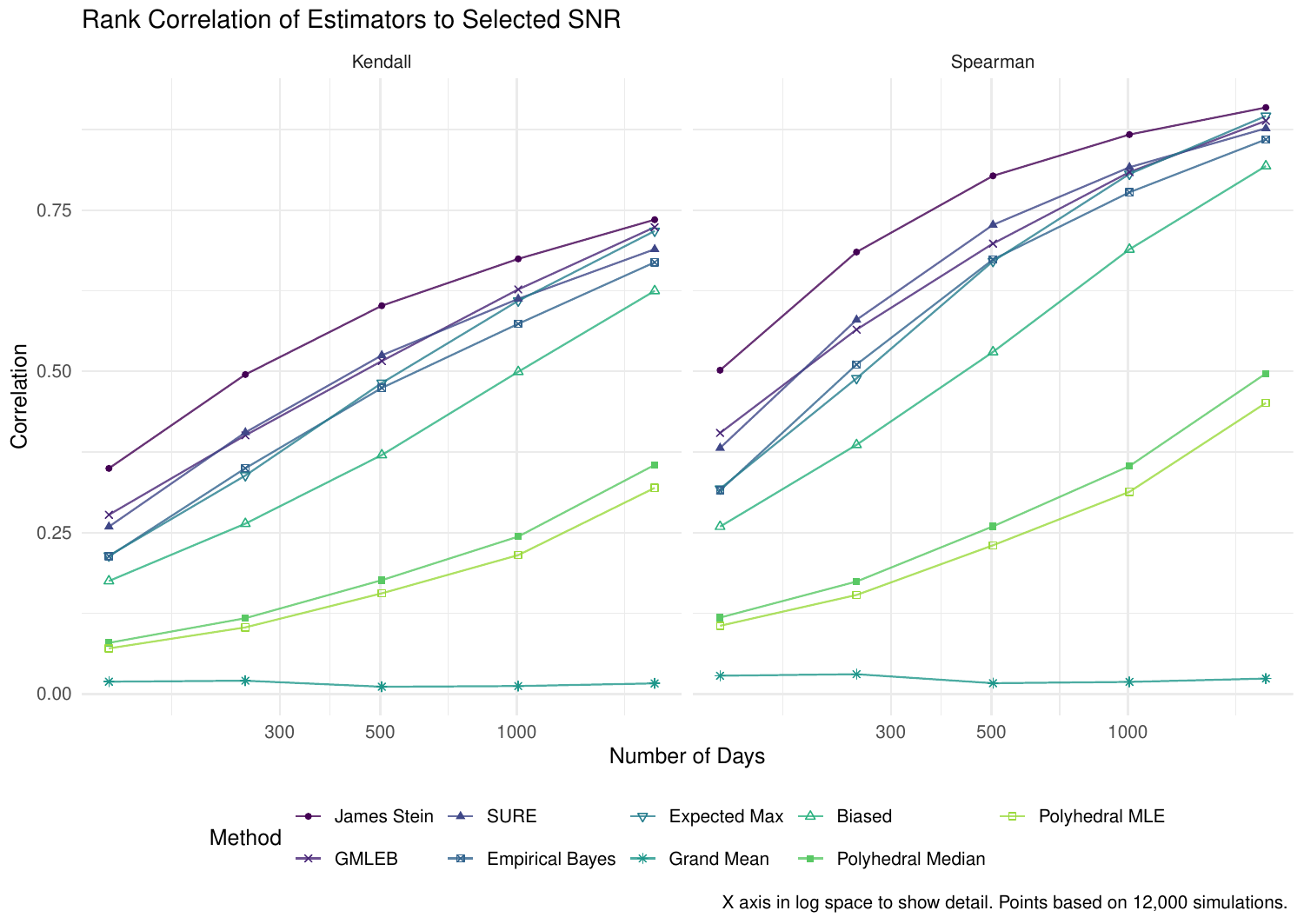} \caption[Rank correlation coefficients of the various tested estimators are plotted versus \ssiz, across all layouts and values of \nstrat and \psnrsig]{Rank correlation coefficients of the various tested estimators are plotted versus \ssiz, across all layouts and values of \nstrat and \psnrsig. Returns are independent. }\label{fig:anb_plot_III}
\end{figure}

\end{knitrout}
\begin{knitrout}\small
\definecolor{shadecolor}{rgb}{0.969, 0.969, 0.969}\color{fgcolor}\begin{figure}[ht]
\includegraphics[width=0.975\textwidth,height=0.691\textwidth]{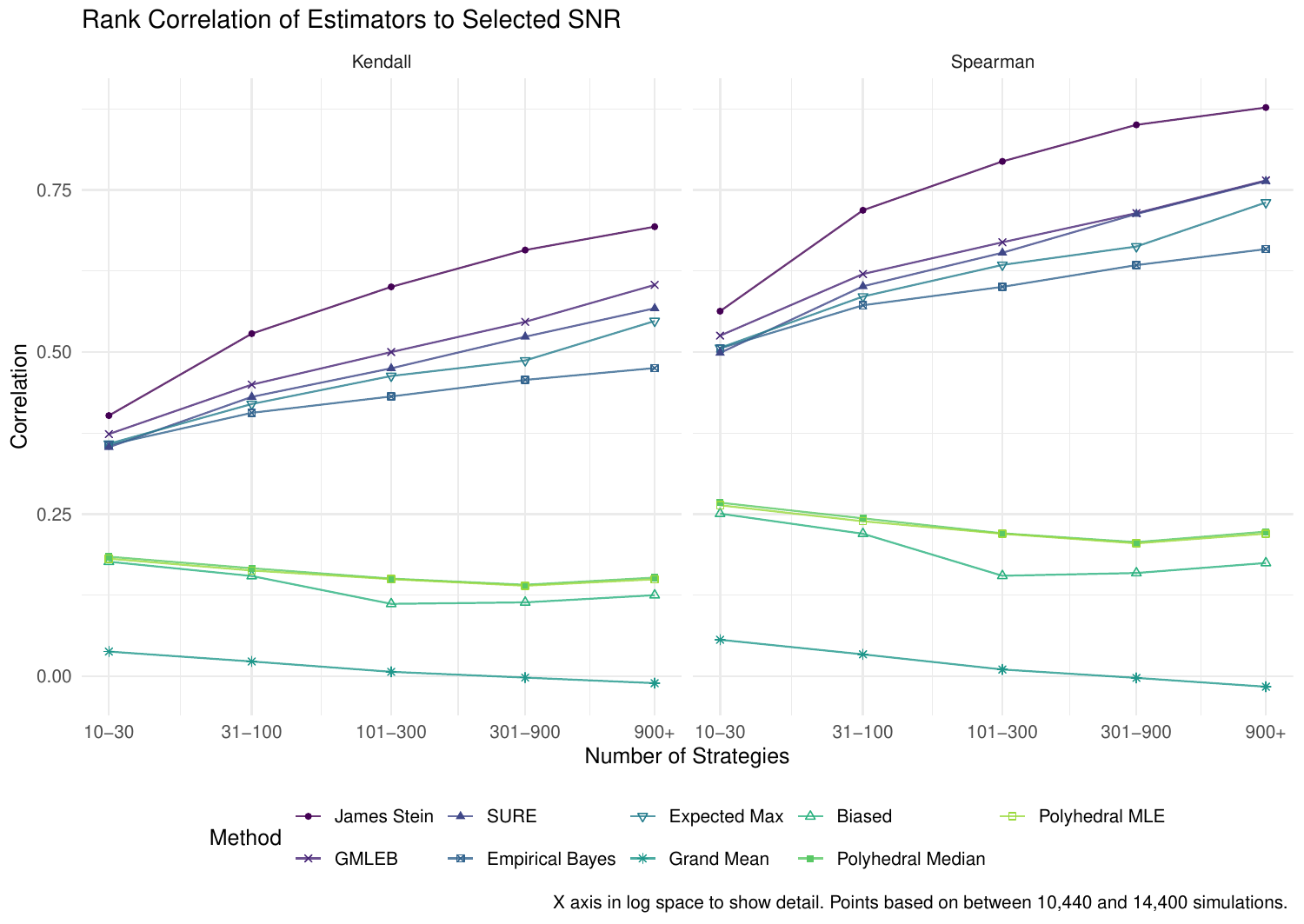} \caption[Rank correlation coefficients of the various tested estimators are plotted versus groups of \nstrat, across all layouts and values of \ssiz and \psnrsig]{Rank correlation coefficients of the various tested estimators are plotted versus groups of \nstrat, across all layouts and values of \ssiz and \psnrsig. Returns are independent. }\label{fig:anb_plot_IV}
\end{figure}

\end{knitrout}
\begin{knitrout}\small
\definecolor{shadecolor}{rgb}{0.969, 0.969, 0.969}\color{fgcolor}\begin{figure}[ht]
\includegraphics[width=0.975\textwidth,height=0.691\textwidth]{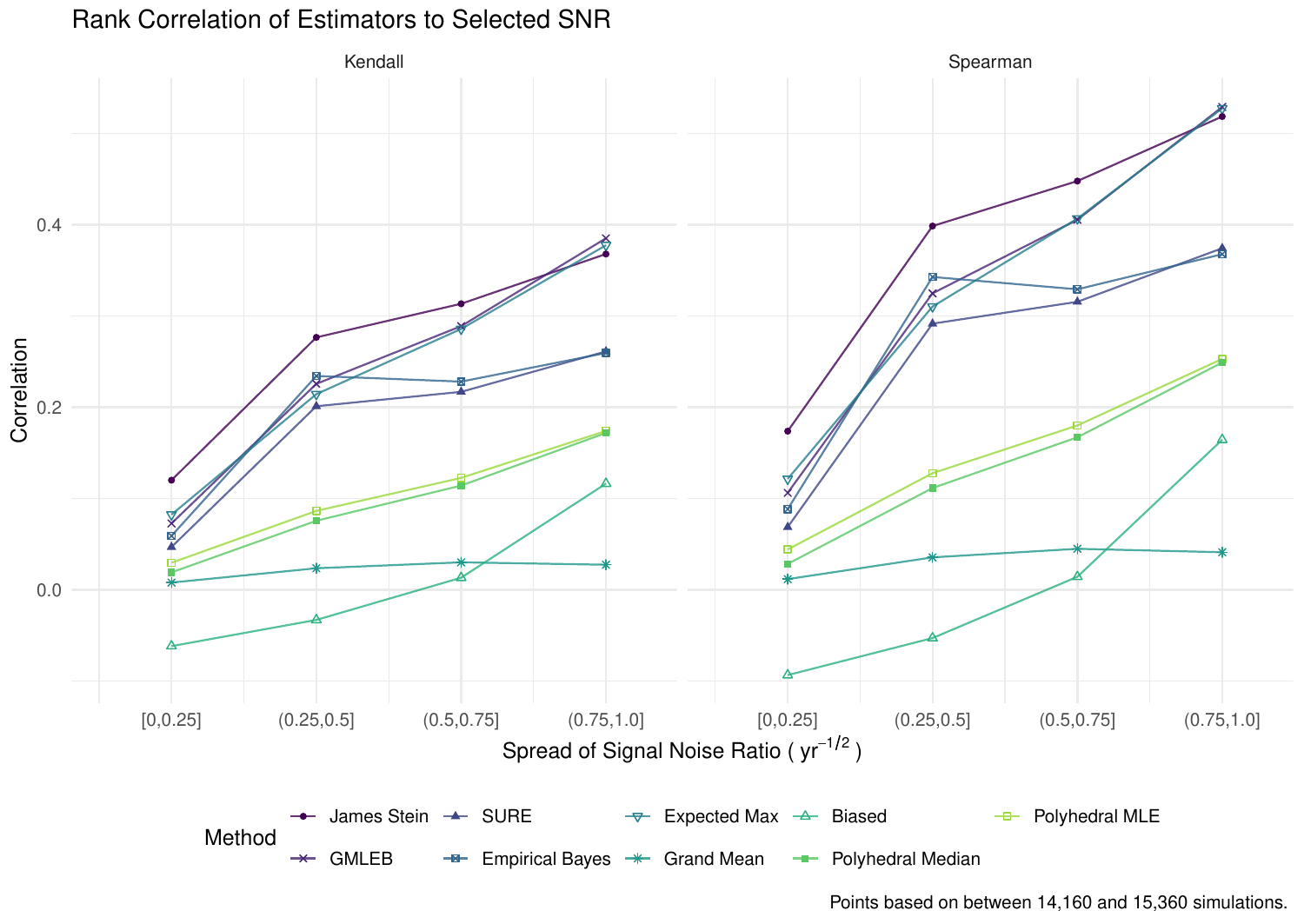} \caption[Rank correlation coefficients of the various tested estimators are plotted versus \psnrsig, across all layouts and values of \ssiz and \nstrat]{Rank correlation coefficients of the various tested estimators are plotted versus \psnrsig, across all layouts and values of \ssiz and \nstrat. Returns are independent. }\label{fig:anb_plot_V}
\end{figure}

\end{knitrout}

\clearpage

\section{Simulations Under Correlated Returns}

In the simulations considered heretofore in this note, the returns of assets were independent, that is $\RMAT=\eye$.
We wish to consider how these estimators will perform for more general correlation structures.
The simplest model, and one which is fairly accurate for a lot of quantitative work, is that of equicorrelation or compound symmetric structure
where $\RMAT = \wrapParens{1-\rho}\eye + \rho \vone\trvone$.
In our simulations we treat $\rho$ as another knob to be controlled.
We do not modify the estimators in any way to deal with correlation, rather we treat the correlation as a nuisance.

We note that in the extreme, for $\rho \approx 1.0$, the errors in \svsr are all perfectly aligned,
and the ordering of the \svsr will mirror that of \pvsnr. 
That is, with high probability $\psnr[\nstrat]$ will be the largest element of \pvsnr.
In this case there is no selection bias from selecting the asset based on the \txtSR,
and the Biased estimator should come to dominate.
We note that the estimators based on the polyhedral lemma have no actual dependence on the correlation structure, \cf
the additional analysis in \secref{polyhedral_lemma} in the appendix.
Thus we expect no real degradation of (otherwise poor) performance for the Polyhedral Median and Polyhedral MLE estimators.

Towards this end we perform simulations to estimate the bias and RMSE of the various estimators.
As above, we perform simulations using three layouts of \pvsnr.
We fix \psnrsig at $0.5\yrto{-\halff}$, 
$\ssiz=1008$, $\nstrat=100$.
We let $\rho$ vary from 
$0.05$ to 
$0.95$.
For each setting of the parameters we perform 500 simulations.

In \figref{sim_rnd_rho_plotz_bias} we plot the bias versus $\rho$;
in \figref{sim_rnd_rho_plotz_rmse} we plot the RMSE.
As perhaps expected, as $\rho \to 1$, the bias of all estimators becomes negative,
except the Biased estimator which becomes unbiased.
All the serious estimators are very biased for large $\rho$, though perhaps GMLEB has the
best performance for correlated asset returns.

In terms of RMSE, all estimators show increased RMSE as $\rho\to 1$, except the Biased estimator.
Likely much of this is due to the increased bias.
The Expected Max estimator has some of the worst performance in terms of RMSE for highly correlated
returns.

\begin{knitrout}\small
\definecolor{shadecolor}{rgb}{0.969, 0.969, 0.969}\color{fgcolor}\begin{figure}[ht]
\includegraphics[width=0.975\textwidth,height=0.691\textwidth]{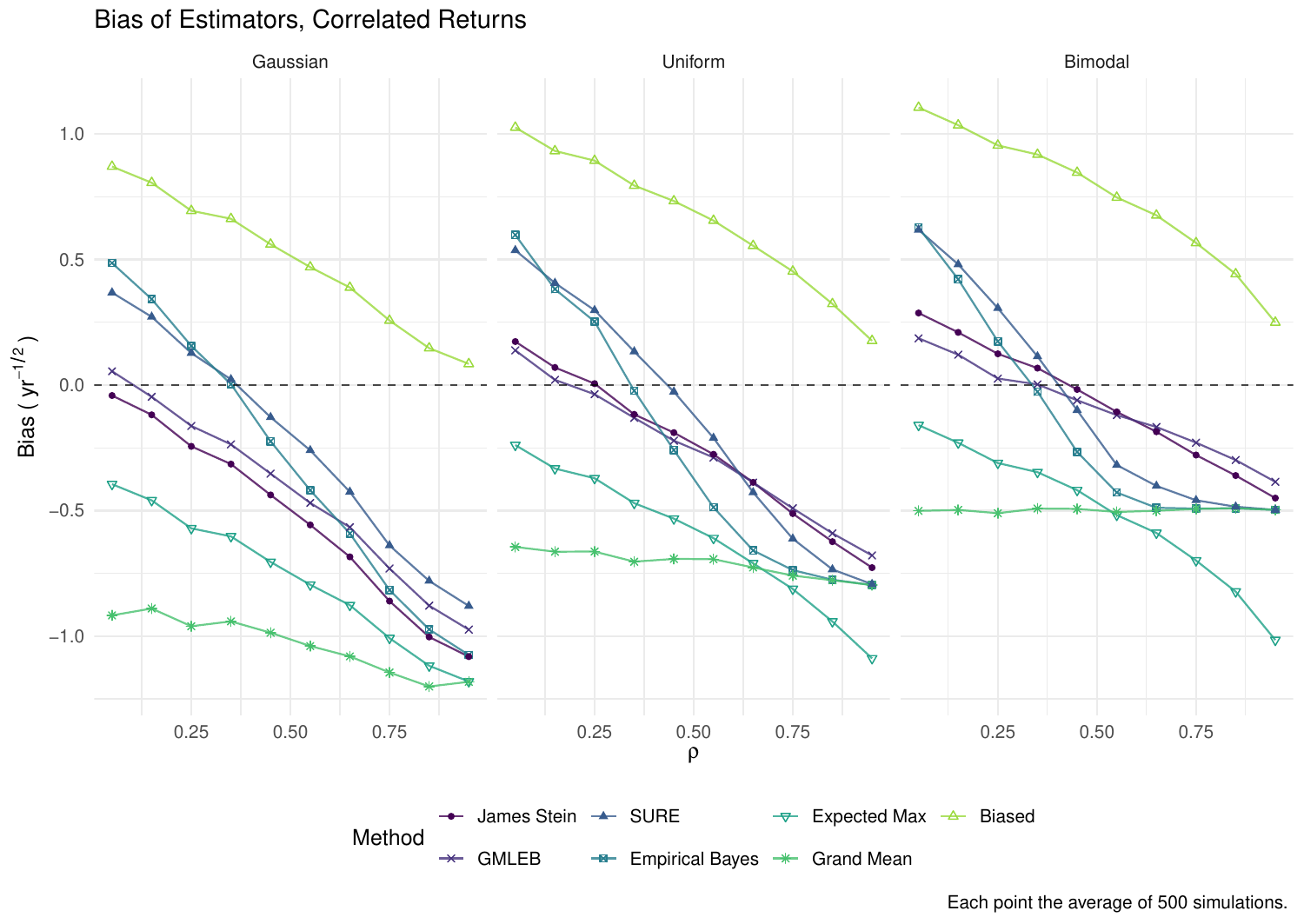} \caption[The empirical biases of the tested estimators are shown versus $\rho$, the common correlation of asset returns]{The empirical biases of the tested estimators are shown versus $\rho$, the common correlation of asset returns. Facet columns show the three different random layouts of the \pvsnr values. We fix $\ssiz=1008$, $\nstrat=100$, $\psnrsig=0.5\yrto{-\halff}$. }\label{fig:sim_rnd_rho_plotz_bias}
\end{figure}

\end{knitrout}

\begin{knitrout}\small
\definecolor{shadecolor}{rgb}{0.969, 0.969, 0.969}\color{fgcolor}\begin{figure}[ht]
\includegraphics[width=0.975\textwidth,height=0.691\textwidth]{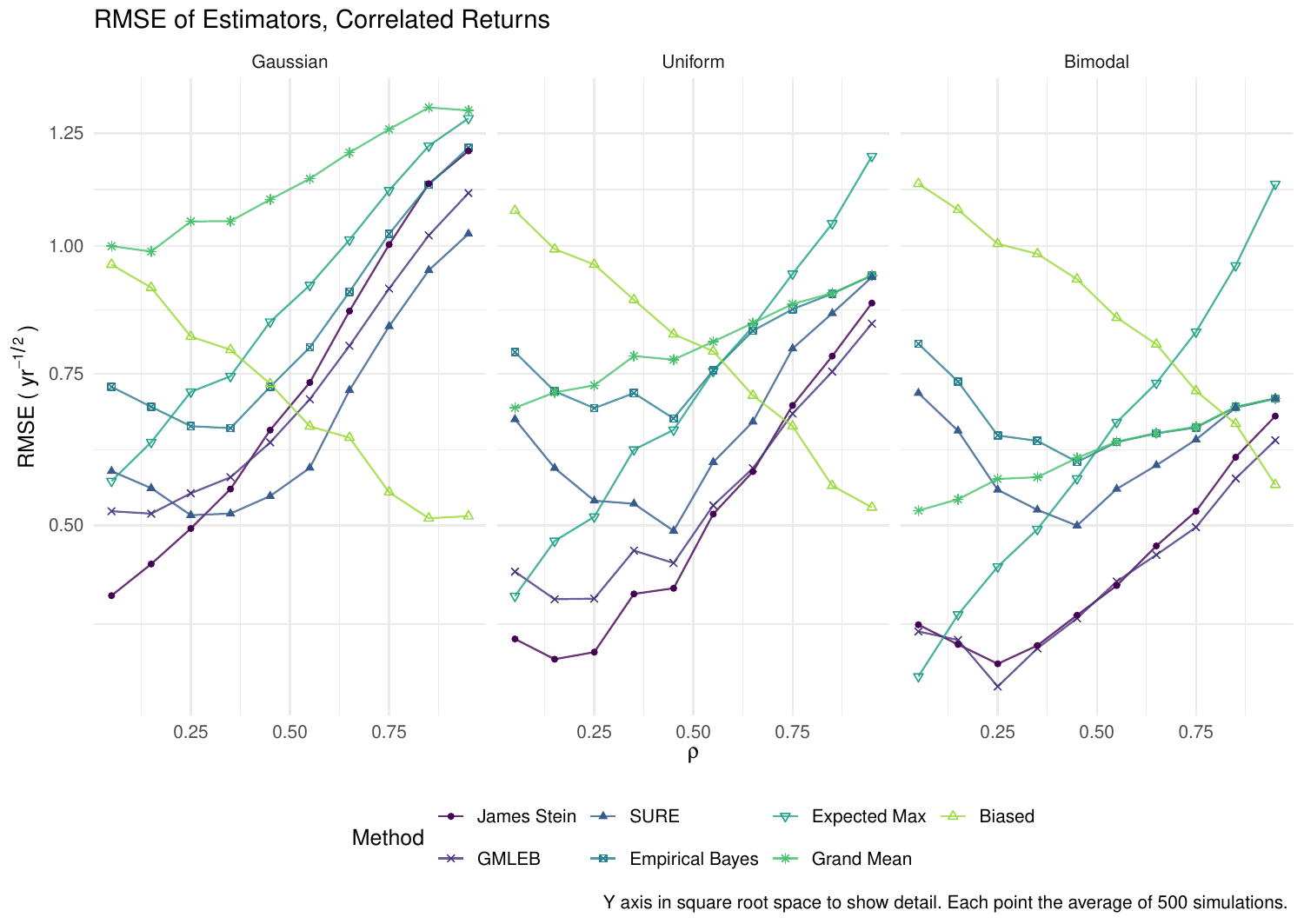} \caption[The empirical RMSE values of the tested estimators are shown versus $\rho$, the common correlation of asset returns]{The empirical RMSE values of the tested estimators are shown versus $\rho$, the common correlation of asset returns. Facet columns show the three different random layouts of the \pvsnr values. We fix $\ssiz=1008$, $\nstrat=100$, $\psnrsig=0.5\yrto{-\halff}$. }\label{fig:sim_rnd_rho_plotz_rmse}
\end{figure}

\end{knitrout}

\clearpage

In addition to the error simulations, 
we also ran ranking simulations, 
similar to those above, but with $\rho$ as an additional knob.
We created 800 corners, where 
we pick $\psnrsig$ uniformly from 
$0\yrto{-\halff}$ to
$1\yrto{-\halff}$,
we select \nstrat log uniformly from 
$10$ to
$2,500$,
and we select $\rho$ uniformly from 0 to 
$0.9$.
We cross this with \ssiz taking values from 
$0.5$ to $8$ years at $252$ days per year,
and cross this with the three different layouts, Gaussian, Uniform and Bimodal.
For each setting of the relevant parameters we performed 8 simulations.
This represents a total of 
96,000 simulations.

We compute all the estimators and the \psnr[\nstrat] in each simulation.
We then compute the rank correlation coefficients of each estimator to the ground truth,
sometimes grouping by relevant dimensions.

In \tabref{anbrho_1} we present the overall correlations over all simulations.
While James Stein estimator still dominates, and GMLEB and SURE traded places in the top three,
we see lower overall rank correlations than in the $\rho=0$ case
shown in \tabref{anb_1}.
That is, correlation of returns has caused degradation of performance of almost all the estimators.
The only exceptions are the Biased and Polyhedral Median estimators,
which is not surprising.

In \tabref{anbrho_2} we tabulate the rank correlations grouped by layout of the \pvsnr.
Somewhat surprisingly, James Stein is the top ranked method for Uniform and Bimodal distributions,
but is in a multi-way tie for second (or perhaps third) place for Gaussian returns,
losing out to the SURE estimator.
It is not clear what drives this result, since the James Stein dominates as an efficient estimator 
in the Gaussian case, but less so for Uniform and Bimodal layouts
as shown in \eg \figref{sim_rnd_plotz_rmse} for uncorrelated returns.


\begin{table}[ht]
\centering
\begin{tabular}{l|ll}
  \hline
Estimator & Kendall & Spearman \\ 
  \hline
James Stein & \textbf{0.34} & \textbf{0.48} \\ 
  SURE & 0.33 & 0.47 \\ 
  GMLEB & 0.32 & 0.45 \\ 
  Empirical Bayes & 0.31 & 0.45 \\ 
  Expected Max & 0.30 & 0.44 \\ 
  Polyhedral Median & 0.21 & 0.31 \\ 
  Biased & 0.19 & 0.27 \\ 
  Polyhedral MLE & 0.16 & 0.23 \\ 
  Grand Mean & 0.01 & 0.01 \\ 
   \hline
\end{tabular}
\caption{Rank correlation coefficients of the various tested estimators versus the \psnr[\nstrat] are shown for correlated returns. Rank correlations are computed across all layouts, \psnrsig, \ssiz, \nstrat, and $\rho$. Each estimator tested on 96,000 simulations. } 
\label{tab:anbrho_1}
\end{table}

\begin{table}[ht]
\centering
\begin{tabular}{ll|ll}
  \hline
Estimator & Layout & Kendall & Spearman \\ 
  \hline
SURE & Gaussian & \textbf{0.4} & \textbf{0.56} \\ 
  Empirical Bayes & Gaussian & 0.38 & 0.53 \\ 
  Expected Max & Gaussian & 0.38 & 0.53 \\ 
  James Stein & Gaussian & 0.38 & 0.52 \\ 
  GMLEB & Gaussian & 0.37 & 0.52 \\ 
  Polyhedral Median & Gaussian & 0.31 & 0.44 \\ 
  Polyhedral MLE & Gaussian & 0.25 & 0.36 \\ 
  Biased & Gaussian & 0.25 & 0.35 \\ 
  Grand Mean & Gaussian & 0.01 & 0.02 \\ 
   \hline
James Stein & Uniform & \textbf{0.36} & \textbf{0.5} \\ 
  SURE & Uniform & 0.33 & 0.47 \\ 
  Empirical Bayes & Uniform & 0.30 & 0.44 \\ 
  GMLEB & Uniform & 0.30 & 0.44 \\ 
  Expected Max & Uniform & 0.29 & 0.42 \\ 
  Polyhedral Median & Uniform & 0.18 & 0.26 \\ 
  Biased & Uniform & 0.16 & 0.23 \\ 
  Polyhedral MLE & Uniform & 0.12 & 0.17 \\ 
  Grand Mean & Uniform & 0.00 & 0.00 \\ 
   \hline
James Stein & Bimodal & \textbf{0.34} & \textbf{0.48} \\ 
  GMLEB & Bimodal & 0.29 & 0.42 \\ 
  SURE & Bimodal & 0.27 & 0.39 \\ 
  Empirical Bayes & Bimodal & 0.26 & 0.37 \\ 
  Expected Max & Bimodal & 0.23 & 0.33 \\ 
  Biased & Bimodal & 0.15 & 0.22 \\ 
  Polyhedral Median & Bimodal & 0.10 & 0.15 \\ 
  Polyhedral MLE & Bimodal & 0.05 & 0.07 \\ 
  Grand Mean & Bimodal & -0.00 & -0.00 \\ 
   \hline
\end{tabular}
\caption{Rank correlation coefficients of the various tested estimators versus the \psnr[\nstrat] are shown for correlated returns. Rank correlations are computed grouped by layout, across all \psnrsig, \ssiz, \nstrat, and $\rho$. Each estimator tested on 32,000 simulations. } 
\label{tab:anbrho_2}
\end{table}

In \figref{anbrho_plot_III}, we plot the rank correlation coefficients
against \ssiz across all values of \nstrat, \psnrsig and common correlation $\rho$.
We see that all estimators improve with increasing sample size, except the Grand Mean,
though some methods seem to be better suited for larger sample sizes.
Compared to \figref{anb_plot_III}, we see lower rank correlation coefficients overall, and perhaps
less distinction between the different serious estimators.

\begin{knitrout}\small
\definecolor{shadecolor}{rgb}{0.969, 0.969, 0.969}\color{fgcolor}\begin{figure}[ht]
\includegraphics[width=0.975\textwidth,height=0.691\textwidth]{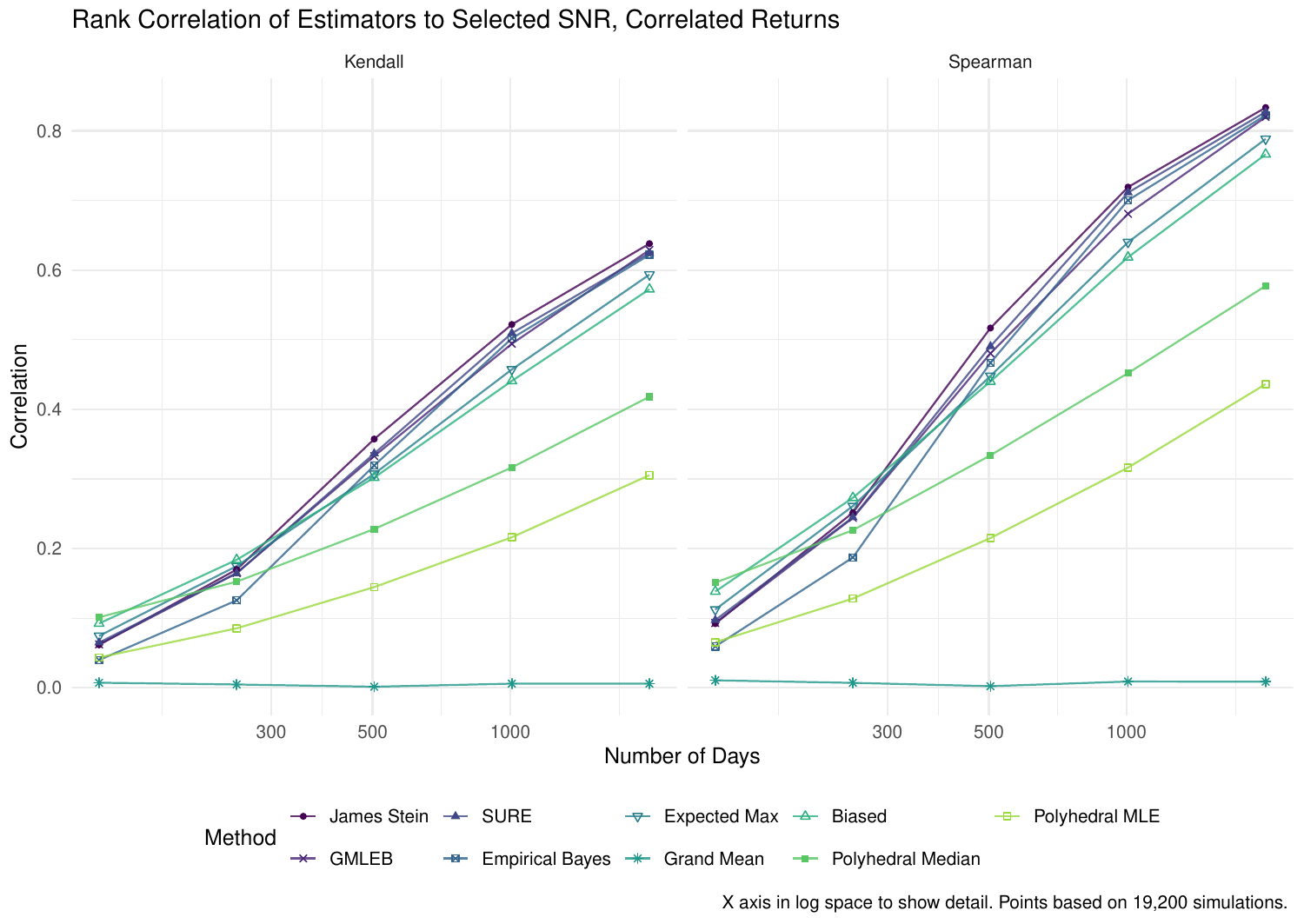} \caption[Rank correlation coefficients of the various tested estimators are plotted versus \ssiz for correlated returns]{Rank correlation coefficients of the various tested estimators are plotted versus \ssiz for correlated returns. Rank correlations are computed grouped by \ssiz, across all layouts, \nstrat, \psnrsig, and $\rho$. Rows based on 19,200 simulations. }\label{fig:anbrho_plot_III}
\end{figure}

\end{knitrout}

In \figref{anbrho_plot_IV}, we plot the rank correlations against cuts of \nstrat.
Compared to \figref{anb_plot_IV}, we see lower overall correlations,
but also the methods seem to peak and rank correlations get worse for larger \nstrat.
It is not clear what drives this phenomenon.
We see that James Stein estimator dominates for each grouping.
For purposes of comparing Bob and Alice when both backtested the same number of strategies,
the James Stein estimator seems to be uniformly the best choice.

\begin{knitrout}\small
\definecolor{shadecolor}{rgb}{0.969, 0.969, 0.969}\color{fgcolor}\begin{figure}[ht]
\includegraphics[width=0.975\textwidth,height=0.691\textwidth]{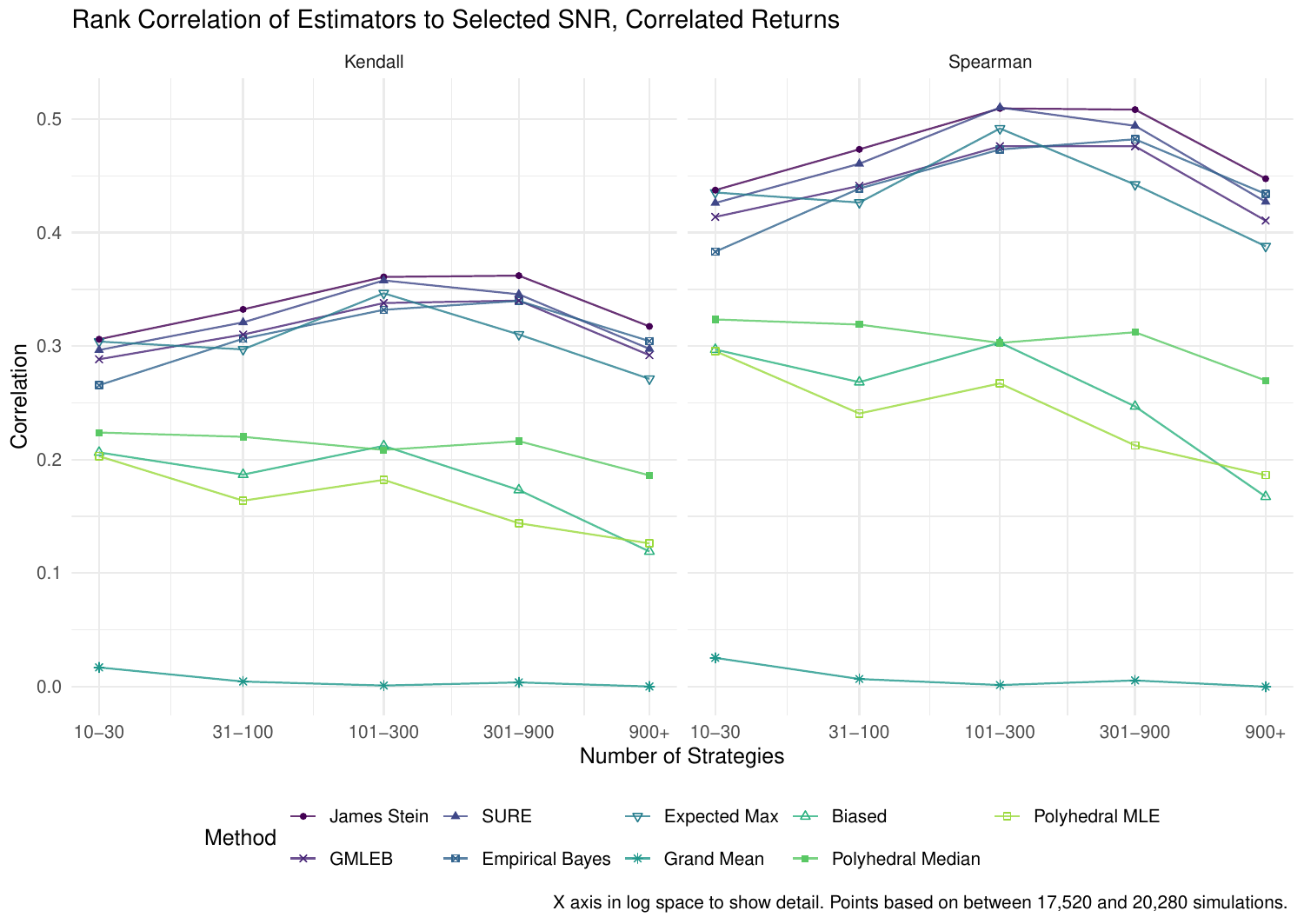} \caption[Rank correlation coefficients of the various tested estimators are plotted versus \nstrat for correlated returns]{Rank correlation coefficients of the various tested estimators are plotted versus \nstrat for correlated returns. Rank correlations are computed grouped by cuts of \nstrat, across all layouts, \ssiz, \psnrsig, and $\rho$. Rows based on between 17,520 and 20,280 simulations. }\label{fig:anbrho_plot_IV}
\end{figure}

\end{knitrout}

In \figref{anbrho_plot_V}, we plot the rank correlations against cuts of \psnrsig.
Compared to \figref{anb_plot_V}, we see lower overall correlations,

\begin{knitrout}\small
\definecolor{shadecolor}{rgb}{0.969, 0.969, 0.969}\color{fgcolor}\begin{figure}[ht]
\includegraphics[width=0.975\textwidth,height=0.691\textwidth]{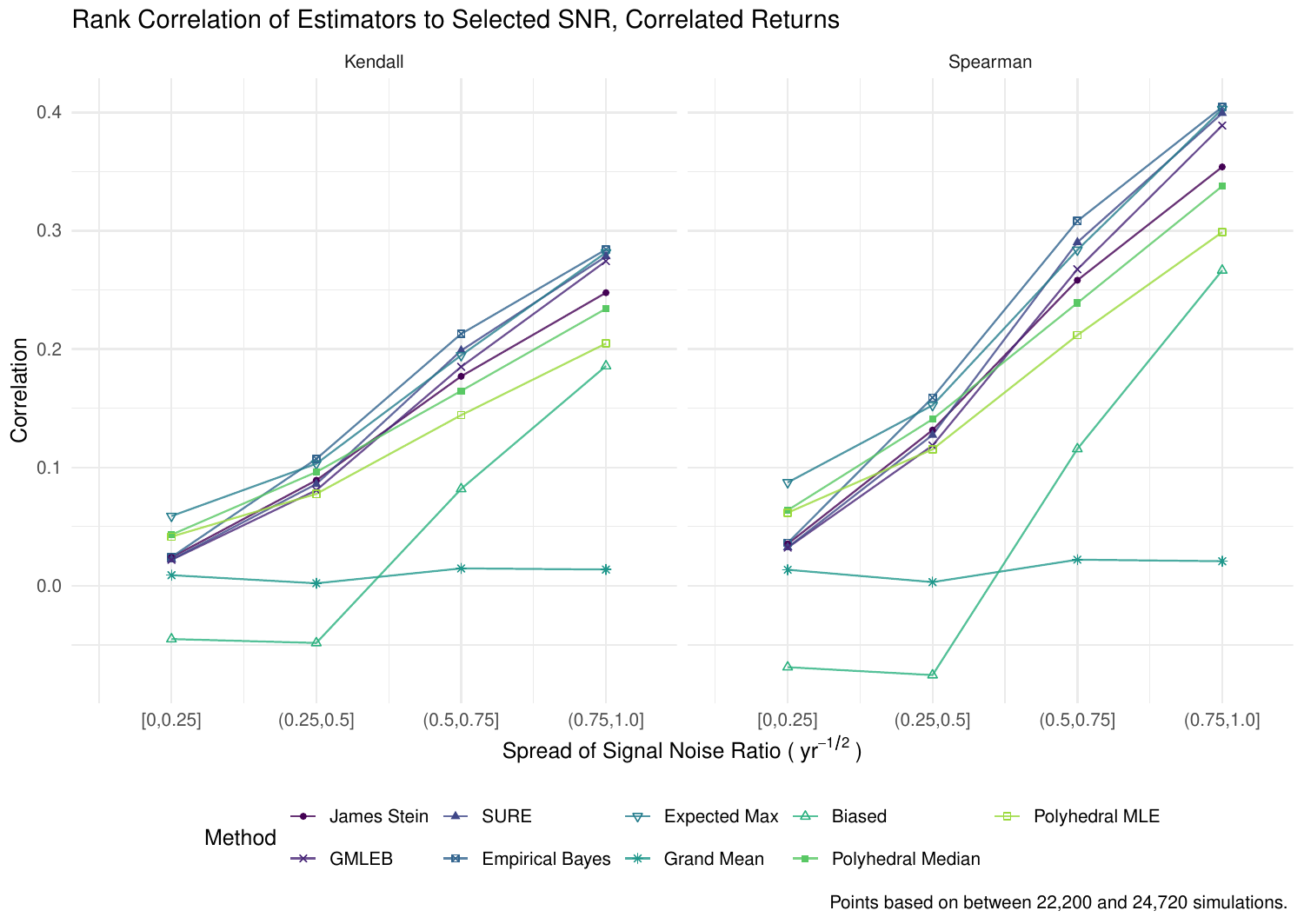} \caption[Rank correlation coefficients of the various tested estimators are plotted versus \psnrsig, called Spread, for correlated returns]{Rank correlation coefficients of the various tested estimators are plotted versus \psnrsig, called Spread, for correlated returns. Rank correlations are computed grouped by cuts of \psnrsig, across all layouts, \ssiz, \nstrat, and $\rho$. Rows based on between 22,200 and 24,720 simulations. }\label{fig:anbrho_plot_V}
\end{figure}

\end{knitrout}

In \tabref{anbrho_6}, we tabulate the rank correlation coefficients for the various methods
against cuts of the common correlation, $\rho$.
We plot the same in the companion plot, \figref{anbrho_plot_VI}.
Not surprisingly, given the discussion above and the findings in the accuracy studies,
the Biased estimator and those based on the polyhedral lemma show better performance
in the high $\rho$ case.
The other estimators suffer from increased correlation.
In particular, the James Stein estimator performs poorly when $\rho$ is fixed on very high values.

\begin{table}[ht]
\centering
\begingroup\scriptsize
\begin{tabular}{ll|ll}
  \hline
Estimator & Correlation & Kendall & Spearman \\ 
  \hline
James Stein & [0,0.25] & \textbf{0.49} & \textbf{0.68} \\ 
  GMLEB & [0,0.25] & 0.44 & 0.60 \\ 
  SURE & [0,0.25] & 0.44 & 0.61 \\ 
  Expected Max & [0,0.25] & 0.40 & 0.56 \\ 
  Empirical Bayes & [0,0.25] & 0.40 & 0.56 \\ 
  Polyhedral Median & [0,0.25] & 0.16 & 0.24 \\ 
  Polyhedral MLE & [0,0.25] & 0.16 & 0.24 \\ 
  Biased & [0,0.25] & 0.13 & 0.19 \\ 
  Grand Mean & [0,0.25] & 0.00 & 0.01 \\ 
   \hline
SURE & (0.25,0.5] & \textbf{0.4} & \textbf{0.57} \\ 
  James Stein & (0.25,0.5] & 0.38 & 0.54 \\ 
  GMLEB & (0.25,0.5] & 0.37 & 0.52 \\ 
  Empirical Bayes & (0.25,0.5] & 0.36 & 0.51 \\ 
  Expected Max & (0.25,0.5] & 0.36 & 0.50 \\ 
  Biased & (0.25,0.5] & 0.19 & 0.28 \\ 
  Polyhedral Median & (0.25,0.5] & 0.19 & 0.28 \\ 
  Polyhedral MLE & (0.25,0.5] & 0.17 & 0.25 \\ 
  Grand Mean & (0.25,0.5] & 0.01 & 0.01 \\ 
   \hline
SURE & (0.5,0.75] & \textbf{0.34} & \textbf{0.48} \\ 
  Expected Max & (0.5,0.75] & \textbf{0.34} & \textbf{0.48} \\ 
  Empirical Bayes & (0.5,0.75] & 0.31 & 0.44 \\ 
  James Stein & (0.5,0.75] & 0.30 & 0.43 \\ 
  GMLEB & (0.5,0.75] & 0.30 & 0.43 \\ 
  Biased & (0.5,0.75] & 0.28 & 0.40 \\ 
  Polyhedral Median & (0.5,0.75] & 0.24 & 0.35 \\ 
  Polyhedral MLE & (0.5,0.75] & 0.19 & 0.28 \\ 
  Grand Mean & (0.5,0.75] & 0.01 & 0.01 \\ 
   \hline
Biased & (0.75,0.9] & \textbf{0.32} & \textbf{0.46} \\ 
  Expected Max & (0.75,0.9] & 0.31 & 0.44 \\ 
  Polyhedral Median & (0.75,0.9] & 0.29 & 0.41 \\ 
  SURE & (0.75,0.9] & 0.25 & 0.37 \\ 
  Empirical Bayes & (0.75,0.9] & 0.23 & 0.33 \\ 
  James Stein & (0.75,0.9] & 0.22 & 0.32 \\ 
  GMLEB & (0.75,0.9] & 0.22 & 0.32 \\ 
  Polyhedral MLE & (0.75,0.9] & 0.21 & 0.31 \\ 
  Grand Mean & (0.75,0.9] & 0.01 & 0.01 \\ 
  \end{tabular}
\endgroup
\caption{Rank correlation coefficients of the various tested estimators versus the \psnr[\nstrat] are shown. Rank correlations are computed grouped by cuts of $\rho$ across all layouts, \psnrsig, \ssiz and \nstrat. Rows based on between 14,160 and 15,360 simulations. } 
\label{tab:anbrho_6}
\end{table}

\begin{knitrout}\small
\definecolor{shadecolor}{rgb}{0.969, 0.969, 0.969}\color{fgcolor}\begin{figure}[ht]
\includegraphics[width=0.975\textwidth,height=0.691\textwidth]{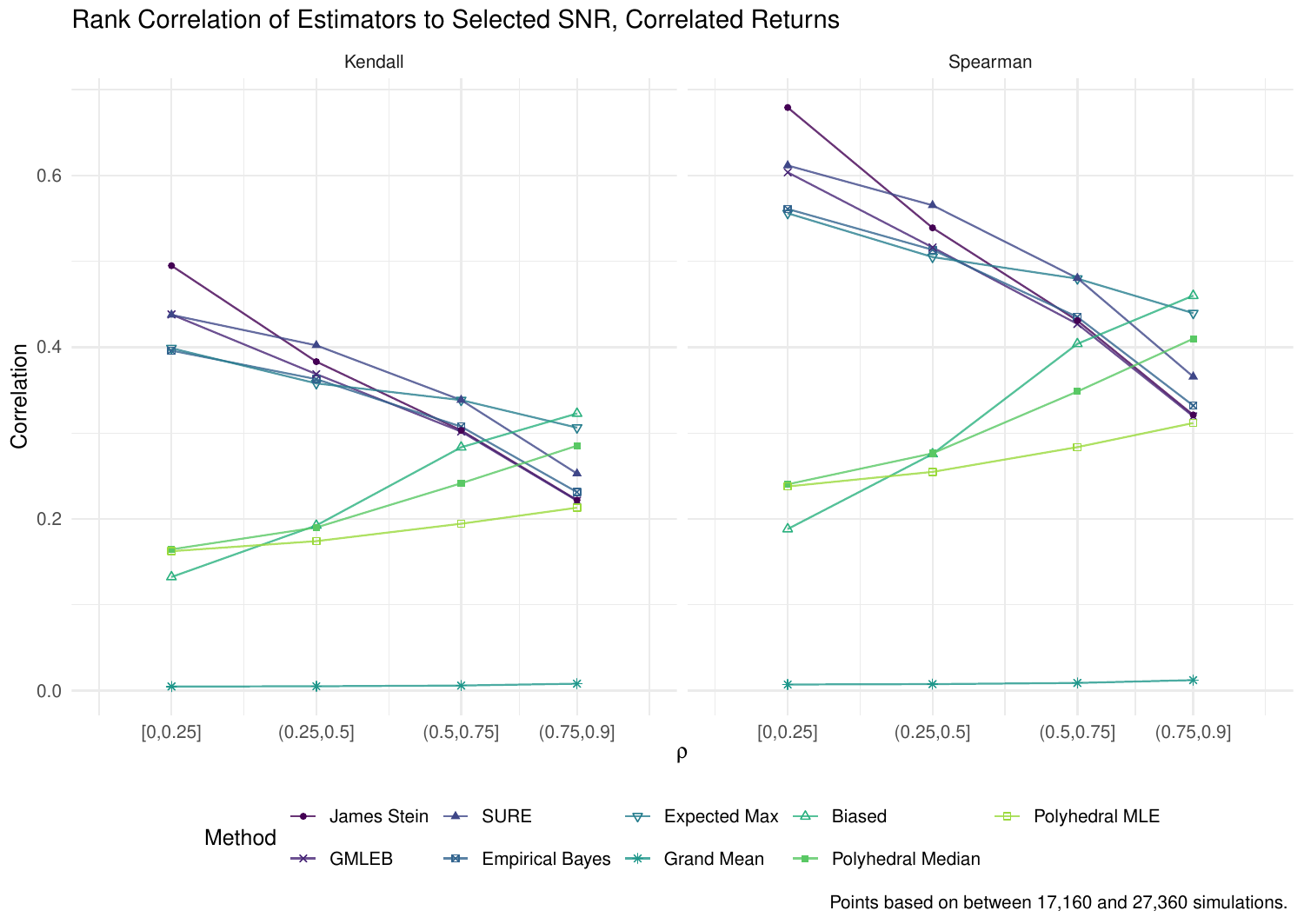} \caption[Rank correlation coefficients of the various tested estimators are plotted versus common correlation of \RMAT, $\rho$]{Rank correlation coefficients of the various tested estimators are plotted versus common correlation of \RMAT, $\rho$. Rank correlations are computed grouped by cuts of $\rho$, across all layouts, \ssiz, \nstrat, and \psnrsig. Rows based on between 17,160 and 27,360 simulations. }\label{fig:anbrho_plot_VI}
\end{figure}

\end{knitrout}


\clearpage

\section{Conclusions and Future Work}

Based on the bias, RMSE and correlation studies, we recommend the use of the James Stein estimator both as an estimator
and for comparing potential investment strategies.
The James Stein estimator is also recommended for its relative simplicity: it can be computed
knowing only \nstrat, \ssiz, \ssr[\nstrat], \ssravg and the empirical variance of the \svsr.
If one were somehow certain that there were only a few good strategies among all strategies tested,
we might recommend GMLEB, but the code is a bit more complicated.
We cannot recommend the estimators based on the polyhedral lemma, nor the Expected Max estimator, due to poor empirical performance.

These results are fairly robust against correlation of asset returns. 
If one needed a general purpose estimator for ranking selected strategies,
then James Stein should be used if the number of strategies \nstrat is always fixed,
or if \ssiz is fixed and fairly large.
If one were certain that returns were highly correlated with each other, however, 
the Biased estimator is recommended as it is effectively unbiased!

We consider this study still preliminary, as there are obvious improvements that could be made:
\begin{compactenum}
\item
In our analysis we viewed correlation of returns as a nuisance that threatens the robustness of our findings,
when in reality correlation is an unfortunate fact of life.
Some of the methods considered have obvious extensions to the case of correlated returns,
but a practical estimator would have to estimate the common correlation, or any other adjustments required of a more general \RMAT.
We have not implemented those procedures yet, but anticipate doing so in a future revision of this paper.


\item 
One can easily imagine some of the estimators could be improved by eliminating irrelevant elements of \svsr.
That is we would seek some way of choosing a threshold so that we ignore all tested strategies with \txtSR less than $\ssr[\nstrat] - \delta$
for some $\delta$ depending on \ssiz.
Perhaps such a threshold could be chosen using Hansen's log-log trick. \cite{pav2019maxsharpe,Hansen:2005}
In particular we suspect this might improve the less selective estimators like James Stein and Expected Max.
\item 
Many of the estimators considered here were constructed for the problem of reduced MSE estimation of the \emph{entire vector} \pvsnr,
not for the selective case of estimating $\psnr[\nstrat]$. 
Perhaps they can be further tuned to reduce MSE for our problem.
In particular it seems the SURE estimate could be improved perhaps by clever application of Stein's lemma to the truncated Gaussian conditional
distribution we get from the polyhedral lemma.
\item 
There is likely a way to look at \svsr and route it to one of the estimators considered here, or otherwise ensemble a few of the estimators
in a way that has even better RMSE.
\item
Estimators based on the polyhedral lemma are unstable when $\ssr[\nstrat] \approx \ssr[\nstrat-1]$. 
A better approach is needed for selective inference, one which recognizes that if $\ssr[\nstrat-1]$ were just a little bigger
we would be considering that strategy instead.
Such an approach would likely give tighter confidence intervals for the inferential problem as well. \cite{pav2019maxsharpe}
\item 
We plan to release the code for this paper as open source, after cleaning it up.
If you have an estimator you think is better, or you think we straw-manned your favorite estimator by our choice of testing parameters,
by all means re-run the code yourself.
\end{compactenum}


\bibliographystyle{plainnat}
\bibliography{common,debsr}

\clearpage
\appendix

\section{AI Use Statement}
We attempted to use AI in the preparation of this manuscript:

\begin{compactitem}
\item We asked gemini 3.0 for advice naming this paper. 
  This resulted in several awful suggestions, which we ignored.
  Ultimately, gemini suggested the ``of'' in the title, which we felt was better than the original, ``on.''
\item We asked gemini 3.0 several times to look over the code for our simulations.
  The LLM failed to find a critical error in an early iteration of our code.
\item We asked gemini 3.0 for help speeding up the GMLEB code.
  By vectorizing the computation, the LLM was able to achieve ballpark 10x speedups for large vectors
  while preserving accuracy compared to the reference implementation.
\item We asked various LLMs, including ChatGPT 5.2, luxor, minimax-m2.5, and trinity-large-preview, for help unearthing
  relevant references. Some of these models fabricated references, including fake DOI.
\item We asked gemini 3.0 to check for grammatical errors and typos. All remaining errors are the author's fault.
\end{compactitem}

\section{Polyhedral Lemma}
\label{sec:polyhedral_lemma}

Here we quote the polyhedral lemma, and then specialize it to the case of conditioning on the maximal element of a normally distributed vector.

\begin{theorem}[Lee \etal, Theorem 5.2 \cite{lee2013exact}]
\label{theorem:lee_etal}
Suppose $\yvec\sim\normlaw{\pvmu,\pvsig}$. 
  Define 
  $\ccc=\pvsig\etav / \qform{\pvsig}{\etav},$ and
  $\zzz=\yvec - \ccc\trAB{\etav}{\yvec}.$
  Let \pnorm[x] be the CDF of a standard normal, and 
  let \tncdf{x}{a}{b}{0}{1} be the CDF of a standard normal truncated
  to $\ccinterval{a}{b}$:
  $$
  \tncdf{x}{a}{b}{0}{1}\defeq\frac{\pnorm[x]-\pnorm[a]}{\pnorm[b] - \pnorm[a]}.
  $$
  Let \tncdf{x}{a}{b}{\pmu}{\psigsq} be the CDF of a general truncated normal,
  defined by
  $$
  \tncdf{x}{a}{b}{\pmu}{\psigsq} = 
  \tncdf{\frac{x-\pmu}{\psigma}}{\frac{a-\pmu}{\psigma}}{\frac{b-\pmu}{\psigma}}{0}{1}.
  $$
  Then, conditional on $\AAA\yvec\le\bbb$, the random variable
  $$
  \tncdf{\trAB{\etav}{\yvec}}{\Vmin}{\Vmax}{\trAB{\etav}{\pvmu}}{\qform{\pvsig}{\etav}}
  $$
  is Uniform on $\ccinterval{0}{1}$, where \Vmin and \Vmax are given by
  \begin{align*}
    \Vmin &= \max_{j:\wrapParens{\AAA\ccc}_j < 0} \frac{\bbb[j] -
    \wrapParens{\AAA\zzz}_j}{\wrapParens{\AAA\ccc}_j},\\
    \Vmax &= \min_{j:\wrapParens{\AAA\ccc}_j > 0} \frac{\bbb[j] -
    \wrapParens{\AAA\zzz}_j}{\wrapParens{\AAA\ccc}_j}.
  \end{align*}
\end{theorem}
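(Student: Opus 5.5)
The plan is the standard decomposition argument of Lee \etal. First, split $\yvec$ into the scalar $\trAB{\etav}{\yvec}$ and the residual $\zzz=\yvec-\ccc\trAB{\etav}{\yvec}$. Both are linear in $\yvec$, so they are jointly Gaussian. Their cross-covariance is
$$
\pvsig\etav-\ccc\,\qform{\pvsig}{\etav}=\vzero
$$
by the choice of $\ccc$. Hence $\zzz$ is independent of $\trAB{\etav}{\yvec}$, and $\trAB{\etav}{\yvec}\sim\normlaw{\trAB{\etav}{\pvmu},\qform{\pvsig}{\etav}}$. This independence step is routine, and it is the key structural fact.

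Second, rewrite the polyhedral event in terms of these two pieces. Writing $\yvec=\zzz+\ccc\,\trAB{\etav}{\yvec}$, the constraint $\AAA\yvec\le\bbb$ becomes, row by row,
$$
\wrapParens{\AAA\ccc}_j\,\trAB{\etav}{\yvec}\le\bbb[j]-\wrapParens{\AAA\zzz}_j .
$$
I will split the rows into three cases according to the sign of $\wrapParens{\AAA\ccc}_j$:
\begin{compactitem}
\item rows with a positive coefficient give upper bounds on $\trAB{\etav}{\yvec}$, whose minimum is $\Vmax$;
\item rows with a negative coefficient give lower bounds (the inequality flips on division), whose maximum is $\Vmin$;
\item rows with a zero coefficient give a condition $\Vzero$ involving $\zzz$ alone.
\end{compactitem}
So the event equals $\{\Vmin\le\trAB{\etav}{\yvec}\le\Vmax,\ \Vzero\}$, and $\Vmin,\Vmax,\Vzero$ are functions of $\zzz$ only.

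Third, condition additionally on $\zzz$. By independence, the conditional law of $\trAB{\etav}{\yvec}$ given $\zzz$ and the event is the normal $\normlaw{\trAB{\etav}{\pvmu},\qform{\pvsig}{\etav}}$ truncated to $\ccinterval{\Vmin}{\Vmax}$. By the probability integral transform, applying that truncated CDF yields a Uniform$\ccinterval{0}{1}$ variable. This holds for every $\zzz$ in the event, so the conditional law does not depend on $\zzz$. Integrating out $\zzz$ (tower property) then gives uniformity conditional on $\AAA\yvec\le\bbb$ alone.

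The main care point is this last step. The truncation limits are random, so I must argue via the conditional distribution given $\zzz$ and not treat $\Vmin,\Vmax$ as fixed. I also need to handle degenerate cases: an empty index set gives $\Vmin=-\infty$ or $\Vmax=+\infty$, and events of probability zero can be ignored. I would state these as measure-zero caveats rather than belabor them.
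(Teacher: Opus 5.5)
Your proposal is correct, and it is essentially the argument of Lee et~al. The paper itself gives no proof of this statement; it only quotes it with a citation. Your steps are the same as theirs: independence of $\trAB{\etav}{\yvec}$ from the residual $\zzz$, rewriting $\AAA\yvec\le\bbb$ row by row as $\Vmin \le \trAB{\etav}{\yvec} \le \Vmax$ together with a $\zzz$-only condition, and then the probability integral transform conditional on $\zzz$ followed by marginalization. The only implicit hypothesis worth making explicit is $\qform{\pvsig}{\etav}>0$, which is needed for $\ccc$ to be defined.
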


In our case we condition on the maximal element of \yvec. 
This simplifies the result somewhat, especially in the case of independent returns:

\begin{corollary}
Suppose $\yvec\sim\normlaw{\pvmu,\pvsig}$, where \pvsig is diagonal.
Conditioning on $y_1 \le y_2 \le \ldots \le y_k$ then random variable
$$
\tncdf{y_k}{y_{k-1}}{\infty}{\pmu[k]}{\psigsq_{k,k}}
$$
is uniformly distributed on $\ccinterval{0}{1},$ 
where 
$\psigsq_{k,k}$ is the $k, k$ element of \pvsig.
\end{corollary}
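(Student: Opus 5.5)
We plan to apply Theorem~\ref{theorem:lee_etal} directly, choosing $\etav = \vect{e}_k$, the $k$-th standard basis vector, and encoding the ordering event $y_1 \le y_2 \le \ldots \le y_k$ as the polyhedron $\AAA\yvec\le\bbb$. Here $\AAA$ is the $(k-1)\times k$ difference matrix whose $j$-th row is $\vect{e}_j - \vect{e}_{j+1}$, and $\bbb=\vzero$. With this choice, $\trAB{\etav}{\yvec}=y_k$, $\trAB{\etav}{\pvmu}=\pmu[k]$ and $\qform{\pvsig}{\etav}=\psigsq_{k,k}$. The corollary therefore follows once we show $\Vmin = y_{k-1}$ and $\Vmax=\infty$.

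First we compute $\ccc$ and $\zzz$, which is where diagonality of \pvsig enters. Since \pvsig is diagonal, $\pvsig\etav = \psigsq_{k,k}\vect{e}_k$, so $\ccc=\vect{e}_k$. Then $\zzz=\yvec - y_k\vect{e}_k$ is simply \yvec with its last coordinate zeroed.

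Next we evaluate the row quantities. We have $(\AAA\ccc)_j = (\vect{e}_j-\vect{e}_{j+1})^\top\vect{e}_k$. This equals $-1$ for $j=k-1$ and $0$ for every other $j$. No index has $(\AAA\ccc)_j>0$, so the minimum defining \Vmax is over an empty set, and we take $\Vmax=+\infty$ by convention. The only index contributing to \Vmin is $j=k-1$. There $(\AAA\zzz)_{k-1} = z_{k-1}-z_k = y_{k-1}-0$, which gives
$$\Vmin = \frac{0 - y_{k-1}}{-1} = y_{k-1}.$$

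Substituting into the conclusion of the theorem shows that, conditional on the ordering, $\tncdf{y_k}{y_{k-1}}{\infty}{\pmu[k]}{\psigsq_{k,k}}$ is uniform on $\ccinterval{0}{1}$. We need to check that the truncated CDF with $b=\infty$ is read as the limit $\pnorm[\infty]=1$; this is consistent with the theorem's definitions.

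The main point requiring care is the rows with $(\AAA\ccc)_j=0$, namely the constraints $y_j\le y_{j+1}$ for $j<k-1$. Theorem~\ref{theorem:lee_etal} tacitly handles these through a third quantity, $\mathcal{V}^0$, which depends only on \zzz. Because \zzz is independent of $y_k$ when \pvsig is diagonal, these constraints do not affect the conditional law of $y_k$ given \zzz. The uniformity statement therefore holds conditionally on \zzz as well, and hence unconditionally given the ordering event.
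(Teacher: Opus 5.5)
Your proposal is correct and follows the paper's proof essentially step for step. It uses the same difference matrix $A$ with $b=0$ and $\eta=e_k$, which gives $c=e_k$ and a single nonzero entry $-1$ in the last coordinate of $Ac$, hence $\mathcal{V}^-=y_{k-1}$ and $\mathcal{V}^+=\infty$. Your added remark on the zero-coefficient rows is harmless, since the quoted theorem already absorbs them. Note, though, that $z$ is independent of $\eta^\top y$ for any $\Sigma$ by the construction of $c$, not only for diagonal $\Sigma$.
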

\begin{proof}
Let \AAA be the \bby{(k-1)}{k} matrix which is mostly zeroes, but with a one on the principle diagonal and -1 on the next diagonal up:
$$
\AAA = \wrapBracks{\begin{array}{cccccc}
  1 & -1 & 0 & \ldots & 0 & 0 \\
  0 & 1 & -1 & \ldots & 0 & 0 \\
  0 & 0 & 1 &  \ldots & 0 & 0 \\
  \vdots & \vdots & \vdots & \ddots & \vdots & \vdots \\
  0 & 0 & 0 &  \ldots & 1 & -1
\end{array}},
$$
and let \bbb be the $k-1$ vector of all zeroes. 
Then we are conditioning on $\AAA\yvec\le\bbb$.

Take $\etav = \basev[k]$. 
Then $\pvsig\etav = \psigsq_{k,k}\basev[k]$,
$\qform{\pvsig}{\etav} = \psigsq_{k,k}$,
and $\trAB{\etav}{\pvmu} = \pmu[k]$.
In the language of the theorem, we then have
$\ccc=\basev[k]$ and 
$\zzz=\yvec - y_k\basev[k]$.
Then $\AAA\ccc$ and $\AAA\zzz$ are length $k-1$ vectors, and $\AAA\ccc$ is very sparse, with just a single non-zero element:
\begin{align*}
\AAA\ccc &= \tr{\wrapBracks{\begin{array}{ccccc} 0 & 0 & 0 & \ldots & -1 \end{array}}},\\
\AAA\zzz &= \tr{\wrapBracks{\begin{array}{ccccc} y_1 - y_2 & y_2 - y_3 & y_3 - y_4 & \ldots & y_{k-1} \end{array}}}.
\end{align*}
Then
$$
\wrapParens{\bbb - \AAA\zzz} \hadd \AAA\ccc  = \tr{\wrapBracks{\begin{array}{ccccc} \mathrm{NA} & \mathrm{NA} & \mathrm{NA} & \ldots & y_{k-1} \end{array}}}.
$$

Then there is only one element of $\AAA\ccc$ that is negative, and none that are positive. 
This establishes that $\Vmin = y_{k-1}$ and $\Vmax$ is the min over an empty set, or $\infty$.
Plugging these into the theorem, we get the desired result.
\end{proof}

Note that the fact that $y_k$ is the largest element of the \yvec had very little bearing on this derivation,
and in fact we could have been merely conditioning on $y_k \ge y_{k-1}$. 
If you modify \AAA to reflect that $y_k$ is the largest of the $y_i$, resulting in 
$$
\AAA = \wrapBracks{\begin{array}{cccccc}
  1 & 0 & 0 & \ldots & 0 & -1 \\
  0 & 1 & 0 & \ldots & 0 & -1 \\
  0 & 0 & 1 &  \ldots & 0 & -1 \\
  \vdots & \vdots & \vdots & \ddots & \vdots & \vdots \\
  0 & 0 & 0 &  \ldots & 1 & -1
\end{array}},
$$
the result is unchanged, assuming $y_{k-1}$ is the largest of the other elements.

Now consider how ill-posed the Polyhedral Median estimator is when $\ssr[\nstrat] = \ssr[\nstrat-1] + \epsilon$.
We wish to find $\psnr[0.5]$ such that
\begin{align*}
\half 
  &= \tncdf{\ssr[\nstrat]}{\ssr[\nstrat-1]}{\infty}{\pmu[0.5]}{\ssiz^{-1}},\\
  &= \frac{\pnorm[\sqrt{\ssiz}\wrapParens{\ssr[\nstrat] - \psnr[0.5]}] - 
\pnorm[\sqrt{\ssiz}\wrapParens{\ssr[\nstrat-1] - \psnr[0.5]}]}{1 - 
\pnorm[\sqrt{\ssiz}\wrapParens{\ssr[\nstrat-1] - \psnr[0.5]}]},\\
  &= \frac{\pnorm[\sqrt{\ssiz}\wrapParens{\ssr[\nstrat-1] + \epsilon - \psnr[0.5]}] - 
\pnorm[\sqrt{\ssiz}\wrapParens{\ssr[\nstrat-1] - \psnr[0.5]}]}{1 - 
\pnorm[\sqrt{\ssiz}\wrapParens{\ssr[\nstrat-1] - \psnr[0.5]}]},\\
  &\approx 
  \frac{\sqrt{\ssiz}\epsilon \dnorm[\sqrt{\ssiz}\wrapParens{\ssr[\nstrat-1] + \epsilon - \psnr[0.5]}]}{1 - 
\pnorm[\sqrt{\ssiz}\wrapParens{\ssr[\nstrat-1] - \psnr[0.5]}]}.
\end{align*}
When $\sqrt{\ssiz}\epsilon$ is small, this forces $1 - \pnorm[\sqrt{\ssiz}\wrapParens{\ssr[\nstrat-1] - \psnr[0.5]}]$ to be small as well,
which causes $\wrapParens{\ssr[\nstrat-1] - \psnr[0.5]}$ to be large, 
driving $\psnr[0.5] \to - \infty$.

Now consider the case of compound symmetric correlation among asset returns. 
The simplified form of the polyhedral lemma is as follows:

\begin{corollary}
Suppose $\yvec\sim\normlaw{\pvmu,\pvsig}$, where 
$\pvsig = \psigsq\wrapParens{\wrapParens{1-\rho} \eye + \rho\vone\trvone}$ for $\rho \ge 0$.
That is, elements of \yvec have common variance and equicorrelation of $\rho$.
Conditioning on $y_1 \le y_2 \le \ldots \le y_k$ then random variable
$$
  \tncdf{y_k}{\frac{y_{k-1} - \rho y_k}{1 - \rho}}{\infty}{\pmu[k]}{\psigsq}
$$
is uniformly distributed on $\ccinterval{0}{1}.$ 
\end{corollary}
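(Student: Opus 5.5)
The plan is to reuse the proof of the previous corollary almost verbatim: apply Theorem~\ref{theorem:lee_etal} with the same bidiagonal $(k-1)\times k$ matrix $\AAA$ (ones on the principal diagonal, $-1$ on the next diagonal up), $\bbb=\vzero$, and $\etav=\basev[k]$. Then $\AAA\yvec\le\bbb$ is exactly the event $y_1\le\ldots\le y_k$, and $\trAB{\etav}{\yvec}=y_k$. The only change from the diagonal case is the form of $\ccc$ and $\zzz$, because $\pvsig\etav$ is no longer a multiple of $\basev[k]$. I will assume $0\le\rho<1$, which is implicit in the statement since it divides by $1-\rho$.

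First I compute the ingredients. We have $\pvsig\basev[k]=\psigsq\wrapParens{\wrapParens{1-\rho}\basev[k]+\rho\vone}$ and $\qform{\pvsig}{\etav}=\psigsq$, so
$$
\ccc=\wrapParens{1-\rho}\basev[k]+\rho\vone, \qquad \trAB{\etav}{\pvmu}=\pmu[k].
$$
Next I use the key structural fact that every row of $\AAA$ sums to zero, so $\AAA\vone=\vzero$. Hence $\AAA\ccc=\wrapParens{1-\rho}\AAA\basev[k]$, which is the vector with $-(1-\rho)$ in its last entry and zeros elsewhere. This is exactly the same sparsity pattern as in the previous corollary, scaled by $1-\rho$.

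For $\zzz=\yvec-y_k\ccc$, the entries are $z_j=y_j-\rho y_k$ for $j<k$ and $z_k=0$. The first $k-2$ entries of $\AAA\zzz$ are therefore the unchanged differences $y_j-y_{j+1}$, since the common shift $\rho y_k$ cancels. The last entry is $z_{k-1}-z_k=y_{k-1}-\rho y_k$.

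Now I read off the truncation limits. Only the last coordinate of $\AAA\ccc$ is nonzero, and it is negative because $\rho<1$. So $\Vmax$ is a minimum over an empty set, giving $\Vmax=\infty$. The lower limit is
$$
\Vmin=\frac{0-\wrapParens{y_{k-1}-\rho y_k}}{-\wrapParens{1-\rho}}=\frac{y_{k-1}-\rho y_k}{1-\rho}.
$$
Substituting $\trAB{\etav}{\yvec}=y_k$, mean $\pmu[k]$, variance $\psigsq$, and these limits into Theorem~\ref{theorem:lee_etal} gives the claimed uniformity.

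The main point needing care is the sign of $(\AAA\ccc)_{k-1}$, together with checking that the $\rho\vone$ component of $\ccc$ really drops out of $\AAA\ccc$. The sign is what places $\Vmin$ rather than $\Vmax$ at the nontrivial value, and it fails at $\rho=1$, which is why that case is excluded. As a sanity check, I will note that this agrees with the previous corollary at $\rho=0$. I will also remark, as was done after that corollary, that the same limits arise if $\AAA$ instead encodes only that $y_k$ is the maximum, since again $\AAA\vone=\vzero$ there.
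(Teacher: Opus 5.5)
Your proposal is correct and follows the paper's proof essentially step for step. You use the same bidiagonal $\AAA$, $\bbb=\vzero$ and $\etav=\basev[k]$, obtain the same $\ccc$ and $\zzz$, and read off $\Vmin$ from the single negative entry $\rho-1$ of $\AAA\ccc$, with $\Vmax=\infty$. Your use of $\AAA\vone=\vzero$ is just a tidier way to get the sparsity of $\AAA\ccc$ that the paper computes directly.
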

\begin{proof}
Let \AAA be the \bby{(k-1)}{k} matrix which is mostly zeroes, but with a one on the principle diagonal and -1 on the next diagonal up:
$$
\AAA = \wrapBracks{\begin{array}{cccccc}
  1 & -1 & 0 & \ldots & 0 & 0 \\
  0 & 1 & -1 & \ldots & 0 & 0 \\
  0 & 0 & 1 &  \ldots & 0 & 0 \\
  \vdots & \vdots & \vdots & \ddots & \vdots & \vdots \\
  0 & 0 & 0 &  \ldots & 1 & -1
\end{array}},
$$
and let \bbb be the $k-1$ vector of all zeroes. 
Then we are conditioning on $\AAA\yvec\le\bbb$.

Take $\etav = \basev[k]$. 
Then 
$$\pvsig\etav = \psigsq\wrapParens{\rho\vone + \wrapParens{1-\rho}\basev[k]},$$
$\qform{\pvsig}{\etav} = \psigsq$,
and $\trAB{\etav}{\pvmu} = \pmu[k]$.
In the language of the theorem, we then have
$\ccc=\wrapParens{\rho\vone + \wrapParens{1-\rho}\basev[k]},$ and
$$
  \zzz=\yvec - y_k\rho\vone - y_k\wrapParens{1-\rho}\basev[k] = \wrapBracks{\begin{array}{c} y_1 - \rho y_k\\y_2 - \rho y_k\\ \vdots \\ y_{k-1} - \rho y_k \\ 0 \end{array}}.
$$
Then $\AAA\ccc$ and $\AAA\zzz$ are length $k-1$ vectors which take values:
\begin{align*}
\AAA\ccc &= \tr{\wrapBracks{\begin{array}{ccccc} 0 & 0 & 0 & \ldots & \rho -1 \end{array}}},\\
\AAA\zzz &= \tr{\wrapBracks{\begin{array}{ccccc} y_1 - y_2 & y_2 - y_3 & y_3 - y_4 & \ldots & y_{k-1} - \rho y_k \end{array}}}.
\end{align*}
Now there is only one element of $\AAA\ccc$ which is negative, and indeed the only that is non-zero, which is the last one.
This establishes that
$$
\Vmin = \frac{0 - \wrapParens{y_{k-1} - \rho y_k}}{\rho - 1} = \frac{y_{k-1} - \rho y_k}{1 - \rho}.
$$
Because $\AAA\ccc$ has no positive elements, $\Vmax=\infty$.
Plugging these into the theorem, we get the desired result.
\end{proof}


\section{Additional Tables and Figures}
\label{sec:additional_tables_and_figures}

\subsection{Ranking Results}

\begin{table}[ht]
\centering
\begingroup\scriptsize
\begin{tabular}{ll|ll}
  \hline
Estimator & Days & Kendall & Spearman \\ 
  \hline
James Stein & 126 & \textbf{0.35} & \textbf{0.5} \\ 
  GMLEB & 126 & 0.28 & 0.40 \\ 
  SURE & 126 & 0.26 & 0.38 \\ 
  Expected Max & 126 & 0.21 & 0.32 \\ 
  Empirical Bayes & 126 & 0.21 & 0.32 \\ 
  Biased & 126 & 0.18 & 0.26 \\ 
  Polyhedral Median & 126 & 0.08 & 0.12 \\ 
  Polyhedral MLE & 126 & 0.07 & 0.11 \\ 
  Grand Mean & 126 & 0.02 & 0.03 \\ 
   \hline
James Stein & 252 & \textbf{0.5} & \textbf{0.69} \\ 
  SURE & 252 & 0.41 & 0.58 \\ 
  GMLEB & 252 & 0.40 & 0.56 \\ 
  Empirical Bayes & 252 & 0.35 & 0.51 \\ 
  Expected Max & 252 & 0.34 & 0.49 \\ 
  Biased & 252 & 0.26 & 0.39 \\ 
  Polyhedral Median & 252 & 0.12 & 0.17 \\ 
  Polyhedral MLE & 252 & 0.10 & 0.15 \\ 
  Grand Mean & 252 & 0.02 & 0.03 \\ 
   \hline
James Stein & 504 & \textbf{0.6} & \textbf{0.8} \\ 
  SURE & 504 & 0.52 & 0.73 \\ 
  GMLEB & 504 & 0.52 & 0.70 \\ 
  Expected Max & 504 & 0.48 & 0.67 \\ 
  Empirical Bayes & 504 & 0.47 & 0.67 \\ 
  Biased & 504 & 0.37 & 0.53 \\ 
  Polyhedral Median & 504 & 0.18 & 0.26 \\ 
  Polyhedral MLE & 504 & 0.16 & 0.23 \\ 
  Grand Mean & 504 & 0.01 & 0.02 \\ 
   \hline
James Stein & 1008 & \textbf{0.67} & \textbf{0.87} \\ 
  GMLEB & 1008 & 0.63 & 0.81 \\ 
  SURE & 1008 & 0.61 & 0.82 \\ 
  Expected Max & 1008 & 0.61 & 0.81 \\ 
  Empirical Bayes & 1008 & 0.57 & 0.78 \\ 
  Biased & 1008 & 0.50 & 0.69 \\ 
  Polyhedral Median & 1008 & 0.24 & 0.35 \\ 
  Polyhedral MLE & 1008 & 0.22 & 0.31 \\ 
  Grand Mean & 1008 & 0.01 & 0.02 \\ 
   \hline
James Stein & 2016 & \textbf{0.74} & \textbf{0.91} \\ 
  GMLEB & 2016 & 0.72 & 0.89 \\ 
  Expected Max & 2016 & 0.72 & 0.90 \\ 
  SURE & 2016 & 0.69 & 0.88 \\ 
  Empirical Bayes & 2016 & 0.67 & 0.86 \\ 
  Biased & 2016 & 0.62 & 0.82 \\ 
  Polyhedral Median & 2016 & 0.36 & 0.50 \\ 
  Polyhedral MLE & 2016 & 0.32 & 0.45 \\ 
  Grand Mean & 2016 & 0.02 & 0.02 \\ 
   \hline
\end{tabular}
\endgroup
\caption{Rank correlation coefficients of the various tested estimators versus the \psnr[\nstrat] are shown. Rank correlations are computed grouped by \ssiz in days, across all layouts, \psnrsig, and \nstrat. Each estimator tested on 12,000 simulations. } 
\label{tab:anb_3}
\end{table}

\begin{table}[ht]
\centering
\begingroup\scriptsize
\begin{tabular}{ll|ll}
  \hline
Estimator & Strategies & Kendall & Spearman \\ 
  \hline
James Stein & 10-30 & \textbf{0.4} & \textbf{0.56} \\ 
  GMLEB & 10-30 & 0.37 & 0.53 \\ 
  Expected Max & 10-30 & 0.36 & 0.51 \\ 
  Empirical Bayes & 10-30 & 0.36 & 0.51 \\ 
  SURE & 10-30 & 0.35 & 0.50 \\ 
  Polyhedral Median & 10-30 & 0.18 & 0.27 \\ 
  Polyhedral MLE & 10-30 & 0.18 & 0.26 \\ 
  Biased & 10-30 & 0.18 & 0.25 \\ 
  Grand Mean & 10-30 & 0.04 & 0.06 \\ 
   \hline
James Stein & 31-100 & \textbf{0.53} & \textbf{0.72} \\ 
  GMLEB & 31-100 & 0.45 & 0.62 \\ 
  SURE & 31-100 & 0.43 & 0.60 \\ 
  Expected Max & 31-100 & 0.42 & 0.59 \\ 
  Empirical Bayes & 31-100 & 0.41 & 0.57 \\ 
  Polyhedral Median & 31-100 & 0.17 & 0.24 \\ 
  Polyhedral MLE & 31-100 & 0.16 & 0.24 \\ 
  Biased & 31-100 & 0.15 & 0.22 \\ 
  Grand Mean & 31-100 & 0.02 & 0.03 \\ 
   \hline
James Stein & 101-300 & \textbf{0.6} & \textbf{0.79} \\ 
  GMLEB & 101-300 & 0.50 & 0.67 \\ 
  SURE & 101-300 & 0.48 & 0.65 \\ 
  Expected Max & 101-300 & 0.46 & 0.63 \\ 
  Empirical Bayes & 101-300 & 0.43 & 0.60 \\ 
  Polyhedral Median & 101-300 & 0.15 & 0.22 \\ 
  Polyhedral MLE & 101-300 & 0.15 & 0.22 \\ 
  Biased & 101-300 & 0.11 & 0.15 \\ 
  Grand Mean & 101-300 & 0.01 & 0.01 \\ 
   \hline
James Stein & 301-900 & \textbf{0.66} & \textbf{0.85} \\ 
  GMLEB & 301-900 & 0.55 & 0.71 \\ 
  SURE & 301-900 & 0.52 & 0.71 \\ 
  Expected Max & 301-900 & 0.49 & 0.66 \\ 
  Empirical Bayes & 301-900 & 0.46 & 0.63 \\ 
  Polyhedral Median & 301-900 & 0.14 & 0.21 \\ 
  Polyhedral MLE & 301-900 & 0.14 & 0.21 \\ 
  Biased & 301-900 & 0.11 & 0.16 \\ 
  Grand Mean & 301-900 & -0.00 & -0.00 \\ 
   \hline
James Stein & 900+ & \textbf{0.69} & \textbf{0.88} \\ 
  GMLEB & 900+ & 0.60 & 0.76 \\ 
  SURE & 900+ & 0.57 & 0.76 \\ 
  Expected Max & 900+ & 0.55 & 0.73 \\ 
  Empirical Bayes & 900+ & 0.48 & 0.66 \\ 
  Polyhedral Median & 900+ & 0.15 & 0.22 \\ 
  Polyhedral MLE & 900+ & 0.15 & 0.22 \\ 
  Biased & 900+ & 0.13 & 0.17 \\ 
  Grand Mean & 900+ & -0.01 & -0.02 \\ 
   \hline
\end{tabular}
\endgroup
\caption{Rank correlation coefficients of the various tested estimators versus the \psnr[\nstrat] are shown. Rank correlations are computed grouped by cuts of \nstrat, across all layouts, \ssiz and \psnrsig. Rows based on between 10,440 and 14,400 simulations. } 
\label{tab:anb_4}
\end{table}

\begin{table}[ht]
\centering
\begingroup\scriptsize
\begin{tabular}{ll|ll}
  \hline
Estimator & Spread ($\yrto{-\halff}$) & Kendall & Spearman \\ 
  \hline
James Stein & [0,0.25] & \textbf{0.12} & \textbf{0.17} \\ 
  Expected Max & [0,0.25] & 0.08 & 0.12 \\ 
  GMLEB & [0,0.25] & 0.07 & 0.11 \\ 
  Empirical Bayes & [0,0.25] & 0.06 & 0.09 \\ 
  SURE & [0,0.25] & 0.05 & 0.07 \\ 
  Polyhedral MLE & [0,0.25] & 0.03 & 0.04 \\ 
  Polyhedral Median & [0,0.25] & 0.02 & 0.03 \\ 
  Grand Mean & [0,0.25] & 0.01 & 0.01 \\ 
  Biased & [0,0.25] & -0.06 & -0.09 \\ 
   \hline
James Stein & (0.25,0.5] & \textbf{0.28} & \textbf{0.4} \\ 
  Empirical Bayes & (0.25,0.5] & 0.23 & 0.34 \\ 
  GMLEB & (0.25,0.5] & 0.23 & 0.32 \\ 
  Expected Max & (0.25,0.5] & 0.21 & 0.31 \\ 
  SURE & (0.25,0.5] & 0.20 & 0.29 \\ 
  Polyhedral MLE & (0.25,0.5] & 0.09 & 0.13 \\ 
  Polyhedral Median & (0.25,0.5] & 0.08 & 0.11 \\ 
  Grand Mean & (0.25,0.5] & 0.02 & 0.04 \\ 
  Biased & (0.25,0.5] & -0.03 & -0.05 \\ 
   \hline
James Stein & (0.5,0.75] & \textbf{0.31} & \textbf{0.45} \\ 
  GMLEB & (0.5,0.75] & 0.29 & 0.41 \\ 
  Expected Max & (0.5,0.75] & 0.29 & 0.41 \\ 
  Empirical Bayes & (0.5,0.75] & 0.23 & 0.33 \\ 
  SURE & (0.5,0.75] & 0.22 & 0.32 \\ 
  Polyhedral MLE & (0.5,0.75] & 0.12 & 0.18 \\ 
  Polyhedral Median & (0.5,0.75] & 0.11 & 0.17 \\ 
  Grand Mean & (0.5,0.75] & 0.03 & 0.04 \\ 
  Biased & (0.5,0.75] & 0.01 & 0.01 \\ 
   \hline
GMLEB & (0.75,1.0] & \textbf{0.39} & \textbf{0.53} \\ 
  Expected Max & (0.75,1.0] & 0.38 & \textbf{0.53} \\ 
  James Stein & (0.75,1.0] & 0.37 & 0.52 \\ 
  SURE & (0.75,1.0] & 0.26 & 0.37 \\ 
  Empirical Bayes & (0.75,1.0] & 0.26 & 0.37 \\ 
  Polyhedral MLE & (0.75,1.0] & 0.17 & 0.25 \\ 
  Polyhedral Median & (0.75,1.0] & 0.17 & 0.25 \\ 
  Biased & (0.75,1.0] & 0.12 & 0.16 \\ 
  Grand Mean & (0.75,1.0] & 0.03 & 0.04 \\ 
   \hline
\end{tabular}
\endgroup
\caption{Rank correlation coefficients of the various tested estimators versus the \psnr[\nstrat] are shown. Rank correlations are computed grouped by cuts of \psnrsig, called Spread, across all layouts, \ssiz and \nstrat. Rows based on between 14,160 and 15,360 simulations. } 
\label{tab:anb_5}
\end{table}


\end{document}